\documentclass[11pt]{amsart}%
\usepackage{amsmath}
\usepackage{amsfonts}
\usepackage{amssymb}
\usepackage{graphicx}%
\setcounter{MaxMatrixCols}{30}
\providecommand{\U}[1]{\protect\rule{.1in}{.1in}}
\newtheorem{theorem}{Theorem}

\newtheorem{corollary}[theorem]{Corollary}

\newtheorem{lemma}[theorem]{Lemma}

\newtheorem{proposition}[theorem]{Proposition}
\newtheorem{remark}[theorem]{Remark}

\newcommand{\nil}[1]{}
\parindent 0in
\email{chams@aub.edu.lb}
\email{alain@connes.org}
\begin{document}
\begin{titlepage}
\vspace{.3cm} \vspace{1cm}
\begin{center}
\baselineskip=16pt \centerline{\large\bf  The Uncanny Precision of the Spectral Action } \vspace{2truecm} \centerline{\large\bf Ali H.
Chamseddine$^{1,3}$\ , \ Alain Connes$^{2,3,4}$\ \ } \vspace{.5truecm}
\emph{\centerline{$^{1}$Physics Department, American University of Beirut, Lebanon}}
\emph{\centerline{$^{2}$College de France, 3 rue Ulm, F75005, Paris, France}}
\emph{\centerline{$^{3}$I.H.E.S. F-91440 Bures-sur-Yvette, France}}
\emph{\centerline{$^{4}$Department of Mathematics, Vanderbilt University, Nashville, TN 37240 USA}}
\end{center}
\vspace{2cm}
\begin{center}
{\bf Abstract}
\end{center}
Noncommutative geometry has been slowly emerging as a new paradigm of geometry which starts from quantum mechanics. One of its key features is that the new geometry is spectral in agreement with the physical way of measuring distances. In this paper we present  a detailed introduction with an overview on the study of the quantum nature of space-time using the tools of noncommutative geometry. In particular we examine the suitability of using the {\it spectral action} as action functional for the theory. To demonstrate how the spectral action encodes the dynamics of gravity we examine the accuracy of the approximation of the spectral action by its asymptotic expansion in the case of the round sphere $S^3 $. We  find that the two terms corresponding to the cosmological constant and the scalar curvature term already give the full result with remarkable accuracy.   This is then applied to the physically relevant case of $S^3\times S^1 $ where we show that the spectral action in this case is also given, for any test function, by the sum of two terms  up to an astronomically small correction, and in particular all higher order terms $a_{2n}$ vanish. This result is confirmed by evaluating the spectral action using the heat kernel expansion where we check that the higher order terms $a_4$ and $a_6$ both vanish due to remarkable cancelations. We also show that the Higgs potential appears as an exact perturbation  when the test  function used is a smooth cutoff function.
\end{titlepage}

\section{ An overview}

Our experimental information on the nature of space-time is based on two sources:

\begin{itemize}
\item High energy physics  based on
cosmic ray information and particle accelerator experiments, whose results are encapsulated in the Standard Model
of particle physics.

\item Cosmology based on astronomical observations.
\end{itemize}

The large scale global picture is well described in terms of Riemannian
geometry and general relativity, but this picture breaks down at high energy  where the quantum
effects take over. It is thus natural to look for a paradigm of geometry
which starts from the quantum framework, where the role of real variables is
played by self-adjoint operators in Hilbert space. Such a framework for
geometry has been slowly emerging under the name of noncommutative geometry.
One of its key features, besides the ability to handle spaces for which
coordinates no longer commute with each other, is that this new geometry is
\emph{spectral}. This is in agreement with physics in which most of the data
we have, either about the far distant parts of the universe or about high
energy physics, are also of spectral nature. The red shifted spectra of
distant galaxies or the momentum eigenstates of outgoing particles in high
energy experiments both point towards a prevalence of spectral information. In the
same vein the existing unit of time (length) is also of spectral nature. From
the mathematical standpoint it takes some doing to obtain a purely spectral
(Hilbert space theoretical) counterpart of Riemannian geometry. One reason for
the difficulty of this task is that, as is well known since the examples of J.
Milnor \cite{Milnor}, non-isometric Riemannian spaces exist which have the
same spectra (for the Dirac or Laplacian operators). Another reason is that
the conditions for a (compact) space to be a smooth manifold are given in
terms of the local charts, whose existence and compatibility is assumed, but
whose intrinsic meaning is more elusive.

The paradigm of noncommutative geometry is that of spectral triple. As its
name indicates it is of spectral nature. By definition a spectral triple is a
unitary Hilbert space representation of \textquotedblleft something". This
something is an equipment that allows one to manipulate algebraically
coordinates and to measure distances. The algebra of the coordinates is
denoted by ${\mathcal{A}}$ and is an involutive algebra, with involution
$a\mapsto a^{\ast}$. The equipment needed to measure distances is the inverse
line element $D$ which is unbounded and fulfills $D=D^{\ast}$. Altogether
these data fulfill some algebraic relations, \textit{e.g.\/} if we talk about
the simplest geometric space \textit{i.e.\/}\ the circle $S^{1}$ the relation
between the complex unitary coordinate $U$ and the inverse line element $D$ is
just $[D,U]=U$, which is in the vein of the Heisenberg commutation relations.

Thus, a geometry is given as a Hilbert space representation of the pair
$({\mathcal{A}},D)$ and can be encoded by the spectral triple $({\mathcal{A}%
},{\mathcal{H}},D)$ where ${\mathcal{H}}$ is the Hilbert space in which both
the algebra ${\mathcal{A}}$ and the inverse line element $D$ are now
concretely represented, the latter as an unbounded self-adjoint operator. This
picture shares with the Wigner paradigm for a particle as an (irreducible)
representation of the Poincar\'e group the feature that it separates the
kinematical relations from the choice of the Hilbert space representation. It
is only when the latter is chosen that actual measurements of distances
between points $x$ and $y$ can be performed by formulas such as
\[
\mathrm{Distance}\,(x,y)=\sup\,|f(x)-f(y)|\,, \ f\in{\mathcal{A}}\,,
\ \Vert[D,f]\Vert\leq1
\]
where indeed the norm $\Vert[D,f]\Vert$ is the operator norm in Hilbert space
and depends on the specific choice of the representation.

We now have at our disposal a reconstruction theorem (\textit{cf\/}
\cite{CoRec}) which shows that ordinary Riemannian spaces are neatly
characterized among spectral triples by the following kinematical relations:

\begin{itemize}
\item The algebra ${\mathcal{A}}$ is commutative.

\item The commutator $[[D,a],b]=0$ for any $a,b\in{\mathcal{A}}$.

\item The following ``Heisenberg type" relation\footnote{Here the multiple
commutator is defined as
\[
[T_{1},T_{2},\ldots,T_{n}]=\sum_{\sigma}\epsilon(\sigma)\,T_{\sigma
(1)}T_{\sigma(2)}\cdots T_{\sigma(n)}
\]
} holds\footnote{We assume for simplicity that the dimension $n$ is odd} , for
some $a_{j}^{\alpha}\in{\mathcal{A}}$:
\begin{equation}
\label{orientability}\sum_{\alpha}a_{0}^{\alpha}[[D,a_{1}^{\alpha}%
],[D,a_{2}^{\alpha}],\ldots, [D,a_{n}^{\alpha}]]=1
\end{equation}

\end{itemize}

together with the following spectral requirements:

\begin{itemize}
\item The $k$-th characteristic value of the resolvent of $D$ is $O(k^{-1/n})$.

\item Regularity.

\item Absolute continuity.
\end{itemize}

We refer to \cite{CoRec} for the precise statement. The meaning of
\eqref{orientability} is that the determinant of the metric $g^{\mu\nu}$ does
not vanish, and more precisely that its square root multiplied by the volume
form $\sum_{\alpha}a_{0}^{\alpha}da_{1}^{\alpha}\wedge da_{2}^{\alpha}%
\wedge\cdots da_{n}^{\alpha}$ gives $1$. The reason for the last two spectral
requirements is technical and allows one to specify the regularity
($C^{\infty}$, real analytic...) of the space and to control the spectral
measures. The first of the spectral requirements is crucial in that it bounds
the \textquotedblleft effective dimension" of the spectrum of the space in the
representation. There are good physics reasons to consider that the apparent
dimension, equal to four, of space-time is governed by the asymptotic behavior
of the eigenvalues of the line element, which is the Euclidean propagator.
Moreover this spectral dimension is not restricted to be an integer a priori
and can model fractal dimension easily. The above reconstruction Theorem shows
furthermore that the operator $D$ in the spectral triple is a Dirac type
operator, \textit{i.e.\/}\ an order one operator with symbol given by a
representation of the Clifford algebra. The restriction to spin manifolds is
obtained by requiring a \emph{real structure} \textit{i.e.\/}\ an antilinear
unitary operator $J$ acting in ${\mathcal{H}}$ which plays the same role and
has the same algebraic properties as the charge conjugation operator in
physics. When the dimension $n$ involved in the reconstruction Theorem is even
(rather than odd) the right hand side of \eqref{orientability} is now replaced
by the chirality operator $\gamma$ which is just a ${\mathbb{Z}}/2$-grading in
mathematical terms. It fulfills the rules
\begin{equation}
\gamma^{2}=1\,,\ \ [\gamma,a]=0,\qquad a\in\mathcal{A} \label{chiral}%
\end{equation}
The following further relations hold for $D,J$ and $\gamma$
\begin{equation}
J^{2}=\varepsilon\,,\ DJ=\varepsilon^{\prime}JD,\quad J\,\gamma=\varepsilon
^{\prime\prime}\gamma J,\quad D\gamma=-\gamma D \label{eight}%
\end{equation}
where {$\varepsilon,\varepsilon^{\prime},\varepsilon^{\prime\prime}\in\left\{
-1,1\right\}  $. The values of the three signs $\varepsilon,\varepsilon
^{\prime},\varepsilon^{\prime\prime}$ depend only, in the classical case of
spin manifolds, upon the value of the dimension $n$ modulo $8$ and are given
in the following table \cite{CoSM}:\vspace{0.08in} }

\begin{center}
{%
\begin{tabular}
[c]{|c|rrrrrrrr|}\hline
\textbf{n } & 0 & 1 & 2 & 3 & 4 & 5 & 6 & 7\\\hline\hline
$\varepsilon$ & 1 & 1 & -1 & -1 & -1 & -1 & 1 & 1\\
$\varepsilon^{\prime}$ & 1 & -1 & 1 & 1 & 1 & -1 & 1 & 1\\
$\varepsilon^{\prime\prime}$ & 1 &  & -1 &  & 1 &  & -1 & \\\hline
\end{tabular}
}
\end{center}

\vspace{0.09in} In the classical case of spin manifolds there is thus a
relation between the metric (or spectral) dimension given by the rate of
growth of the spectrum of $D$ and the integer modulo $8$ which appears in the
above table. For more general spaces however the two notions of dimension (the
dimension modulo $8$ is called the $KO$-dimension because of its origin in
$K$-theory) become independent since there are spaces $F$ of metric dimension
$0$ but of arbitrary $KO$-dimension. More precisely, starting with an ordinary
spin geometry $M$ of dimension $n$ and taking the product $M\times F$, one
obtains a space whose metric dimension is still $n$ but whose $KO$-dimension
is the sum of $n$ with the $KO$-dimension of $F$, which as explained can take
any value modulo $8$. Thus, one now has the freedom to shift the
$KO$-dimension at very little expense \textit{i.e.\/}\ in a way which does not
alter the plain metric dimension. As it turns out the Standard Model with
neutrino mixing favors the shift of dimension from the $4$ of our familiar
space-time picture to $10=4+6=2$ modulo $8$ \cite{Barrett}, \cite{AC}. The
shift from $4$ to $10$ is a recurrent idea in string theory compactifications,
where the $6$ is the dimension of the Calabi-Yau manifold used to
\textquotedblleft compactify". Effectively the dimension $10$ is related to
the existence of Majorana-Weyl fermions. The difference between this approach
and ours is that, in the string compactifications, the metric dimension of the
full space-time is now $10$ which can only be reconciled with what we
experience by requiring that the Calabi-Yau fiber remains unnaturally small.
In order to learn how to perform the above shift of dimension using a
$0$-dimensional space $F$, it is important to classify such spaces. This was
done in \cite{beggar}, \cite{SM}. There, we classified the \emph{finite}
spaces $F$ of given $KO$-dimension. A space $F$ is finite when the algebra
${\mathcal{A}}_{F}$ of coordinates on $F$ is finite dimensional. We no longer
require that this algebra is commutative. The first key advantage of dropping
the commutativity can be seen in the simplest case where the finite space $F$
is given by
\begin{equation}
{\mathcal{A}}=M_{k}({\mathbb{C}})\,,\ {\mathcal{H}}=M_{k}({\mathbb{C}%
})\,,\ D=0\,,\ J\,\xi=\xi^{\ast},\qquad\xi\in{\mathcal{H}}_{F} \label{einsym}%
\end{equation}
where the algebra ${\mathcal{A}}=M_{k}({\mathbb{C}})$ is acting by left
multiplication in ${\mathcal{H}}=M_{k}({\mathbb{C}})$. We have shown in
\cite{cc2} that the study of pure gravity on the space $M\times F$ yields
Einstein gravity on $M$ minimally coupled with Yang-Mills theory for the gauge
group $\mathrm{SU}(k)$. The Yang-Mills gauge potential appears as the inner
part of the metric, in the same way as the group of gauge transformations (for
the gauge group $\mathrm{SU}(k)$) appears as the group of inner
diffeomorphisms. One can see in this Einstein-Yang-Mills example that the
finite geometry fulfills a nice substitute of commutativity (of ${\mathcal{A}%
}$) namely
\begin{equation}
\lbrack a,b^{0}]=0\,,\quad\forall\,a,b\in{\mathcal{A}}\, \label{orderzero}%
\end{equation}
where for any operator $a$ in ${\mathcal{H}}$, $a^{0}=Ja^{\ast}J^{\,-1}$. This
is called the order zero condition. Moreover the representation of
${\mathcal{A}}$ and $J$ in {${\mathcal{H}}$ is irreducible. This example is
(taking $\gamma=1$) of $KO$-dimension equal to $0$. In \cite{beggar} we
classified the irreducible $({\mathcal{A}},{\mathcal{H}},J)$ and found out
that the solutions fall into two classes. Let $\mathcal{A}_{\mathbb{C}}$ be
the complex linear space generated by $\mathcal{A}$ in ${\mathcal{L}%
}({\mathcal{H}})$, the algebra of operators in ${\mathcal{H}}.$ By
construction ${\mathcal{A}}_{\mathbb{C}}$ is a complex algebra and one only
has two cases: }

\begin{enumerate}
\item The center $Z\left(  \mathcal{A}_{\mathbb{C}}\right)  $ is $\mathbb{C}$,
in which case ${\mathcal{A}}_{\mathbb{C}}=M_{k}({\mathbb{C}})$ for some $k$.

\item The center $Z\left(  \mathcal{A}_{\mathbb{C}}\right)  $ is
$\mathbb{C\oplus C}$ and ${\mathcal{A}}_{\mathbb{C}}=M_{k}({\mathbb{C}})\oplus
M_{k}({\mathbb{C}})$ for some $k$.
\end{enumerate}

Moreover the knowledge of ${\mathcal{A}}_{\mathbb{C}}=M_{k}({\mathbb{C}})$
shows that ${\mathcal{A}}$ is either $M_{k}({\mathbb{C}})$ (unitary case),
$M_{k}({\mathbb{R}})$ (real case) or, when $k=2\ell$ is even, $M_{\ell
}({\mathbb{H}})$, where ${\mathbb{H}}$ is the field of quaternions (symplectic
case). This first case is a minor variant of the Einstein-Yang-Mills case
described above. It turns out by studying their ${\mathbb{Z}}/2$ gradings
$\gamma$, that these cases are incompatible with $KO$-dimension $6$ which is
only possible in case (2). If one assumes that one is in the \textquotedblleft
symplectic--unitary" case and that the grading is given by a grading of the
vector space over ${\mathbb{H}}$, one can show that the dimension of
${\mathcal{H}}$ which is $2k^{2}$ in case (2) is at least $2\times16$ while
the simplest solution is given by the algebra ${\mathcal{A}}=M_{2}%
({\mathbb{H}})\oplus M_{4}({\mathbb{C}})$. This is an important variant of the
Einstein-Yang-Mills case because, as the center $Z\left(  \mathcal{A}%
_{\mathbb{C}}\right)  $ is $\mathbb{C\oplus C}$, the product of this finite
geometry $F$ by a manifold $M$ appears, from the commutative standpoint, as
two distinct copies of $M$. We showed in \cite{beggar} that requiring that
these two copies of $M$ stay a finite distance apart reduces the symmetries
from the group $\mathrm{SU}(2)\times\mathrm{SU}(2)\times\mathrm{SU}(4)$ of
inner automorphisms\footnote{of the even part of the algebra} to the
symmetries $U(1)\times\mathrm{SU}(2)\times\mathrm{SU}(3)$ of the Standard
Model. This reduction of the gauge symmetry occurs because of the second
kinematical condition $[[D,a],b]=0$ which in the general case becomes:
\begin{equation}
\lbrack\lbrack D,a],b^{0}]=0\,,\quad\forall\,a,b\in{\mathcal{A}}\,
\label{orderone}%
\end{equation}
Thus the noncommutative space singles out $4^{2}=16$ as the number of physical
fermions, the symmetries of the standard model emerge, and moreover, as shown
in \cite{mc2}, the model predicts the existence of right-handed neutrinos, as
well as the see-saw mechanism. In the above Einstein-Yang-Mills case, the
Yang-Mills fields appeared as the inner part of the metric in the same way as
the group of gauge transformations (for the gauge group $\mathrm{SU}(k)$)
appeared as the group of inner diffeomorphisms. But in that case all fields
remained massless. It is the existence of a non-zero $D$ for the finite space
$F$ that generates the Higgs fields and the masses of the Fermions and the $W$
and $Z$ fields through the Higgs mechanism. The new fields are computed from
the kinematics but the action functional, the \emph{spectral action}, uses in
a crucial manner the representation in Hilbert space. In order to explain the
conceptual meaning of this spectral action functional it is important to
understand in which way it encodes gravity in the commutative case. As
explained above the spectrum of the Dirac operator (or similarly of the
Laplacian) does not suffice to encode an ordinary Riemannian geometry. However
the Einstein-Hilbert action functional, given by the integral of the scalar
curvature multiplied by the volume form, appears from the heat expansion of
the Dirac operator. More generally it appears as the coefficient of
$\Lambda^{2}$ in the asymptotic expansion for large $\Lambda$ of the trace
\begin{equation}
\mathrm{Tr}(f(D/\Lambda))\sim2\Lambda^{4}f_{4}a_{0}+2\Lambda^{2}f_{2}%
a_{2}+f_{0}a_{4}+\ldots+\Lambda^{-2k}f_{-2k}a_{4+2k}+\ldots\label{heat}%
\end{equation}%
\begin{figure}
[ptb]
\begin{center}
\includegraphics[
height=2.7544in,
width=4.2134in
]%
{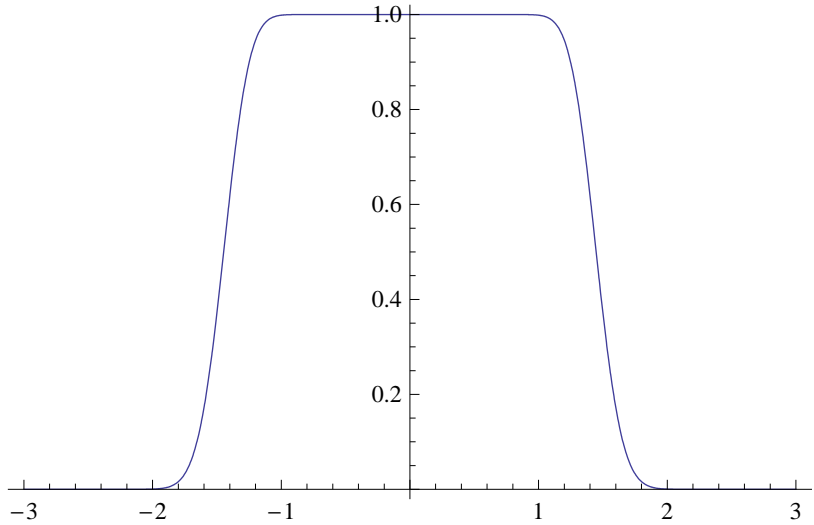}%
\caption{Cutoff function f}%
\label{figure1}%
\end{center}
\end{figure}
when the Riemannian geometry $M$ is of dimension $4$, and where $f$ is a
smooth even function with fast decay at infinity. The choice of the function
$f$ only enters in the multiplicative factors $f_{4}=\int_{0}^{\infty
}f(u)u^{3}du$, $f_{2}=\int_{0}^{\infty}f(u)udu$, $f_{0}=f(0)$ and
$f_{-2k}=\left(  -1\right)  ^{k}\frac{k!}{\left(  2k\right)  !}f^{(2k)}(0)$,
\textit{i.e.\/}\ the derivatives of even order at $0$, for $k\geq0$. Thus,
when $f$ is a \textquotedblleft cutoff" function (\textit{cf\/} Figure
\ref{figure1}) it has vanishing Taylor expansion at $0$ and the asymptotic
expansion \eqref{heat} only has three terms:
\begin{equation}
\mathrm{Tr}(f(D/\Lambda))\sim2\Lambda^{4}f_{4}a_{0}+2\Lambda^{2}f_{2}%
a_{2}+f(0)a_{4} \label{heat1}%
\end{equation}
The term in $\Lambda^{4}$ is a cosmological term, the term in $\Lambda^{2}$ is
the Einstein-Hilbert action functional, and the constant term $a_{4}$ gives
the integral over $M$ of curvature invariants such as the square of the Weyl
curvature and topological terms such as the Gauss-Bonnet, with numerical
coefficients of order one. It is thus natural to take the expression
$\mathrm{Tr}(f(D/\Lambda))$ as a natural spectral formulation of gravity. We
are working in the Euclidean formulation \textit{i.e.\/}\ with a signature
$(+,+,+,+)$ and the Euclidean space-time manifold is taken to be compact for
simplicity. In the non-compact case we have shown in \cite{dilaton} how to
replace the simple counting of eigenvalues of $|D|$ of size $<\Lambda$
given\footnote{for $f$ a cutoff function} by \eqref{heat}, by a localized
counting. This simply introduces a dilaton field. We also tested this idea of
taking the expression $\mathrm{Tr}(f(D/\Lambda))$ as a natural spectral
formulation of gravity by computing this expression in the case of manifolds
with boundary and we found \cite{boundary} that it reproduces exactly the
Hawking-Gibbons \cite{GH} additional boundary terms which they introduced in
order to restore consistency and obtain Einstein equations as the equations of
motion in the case of manifolds with boundary. Further, Ashtekar et al
\cite{ashtekar} have recently shown that the use of the Dirac operator in a
first order formalism, which is natural in the noncommutative setting, avoids
the tuning and subtraction of a constant term. One may be worried by the large
cosmological term $\Lambda^{4}f_{4}a_{4}$ that appears in the spectral action.
It is large because the value of the cutoff scale $\Lambda$ is dictated,
roughly speaking, by the Planck scale since the term $\Lambda^{2}f_{2}a_{2}$
is the gravitational action $\frac{1}{16\pi G}\int R\sqrt{g}d^{4}x$. Thus it
seems at first sight that the huge cosmological term $\Lambda^{4}f_{4}a_{4}$
overrides the more subtle Einstein term $\Lambda^{2}f_{2}a_{2}$. There is,
however, and even at the classical level to which the present discussion
applies a simple manner to overcome this difficulty. Indeed the kinematical
relation \eqref{orientability} in fact fixes the Riemannian volume form to
be\footnote{up to a numerical factor}
\begin{equation}
\sqrt{g}d^{4}x=\sum_{\alpha}a_{0}^{\alpha}da_{1}^{\alpha}\wedge da_{2}%
^{\alpha}\wedge da_{3}^{\alpha}\wedge da_{4}^{\alpha} \label{vform}%
\end{equation}
Thus, if we vary the metric with this constraint we are in the context of
unimodular gravity \cite{uni}, and the cosmological term cancels out in the
computation of the conditional probability of a gravitational configuration
with total volume $V$ held fixed. The remaining unknown, then, is the
distribution of volumes $d\mu(V)$, which is just a distribution on the
half-line ${\mathbb{R}}_{+}\ni V$. The striking conceptual advantages of the
spectral action are

\begin{itemize}
\item Simplicity: when $f$ is a cutoff function, the spectral action is just
\emph{counting} the number of eigenstates of $D$ in the range $[-\Lambda
,\Lambda]$.

\item Positivity: when $f\geq0$ (which is the case for a cutoff function) the
action $\mathrm{Tr}(f(D/\Lambda))\geq0$ has the correct sign for a Euclidean action.

\item Invariance: one is used to the diffeomorphism invariance of the
gravitational action but the functional $\mathrm{Tr}(f(D/\Lambda))$ has a much
stronger invariance group, the unitary group of the Hilbert space
${\mathcal{H}}$.
\end{itemize}

One price to pay is that, as such, the action functional $\mathrm{Tr}%
(f(D/\Lambda))$ is not local. It only becomes so when it is replaced by the
asymptotic expansion \eqref{heat1}. This suggests that one should at least
compute the next term in the asymptotic expansion (even though this term
appears multiplied by the second derivative $f^{\prime\prime}(0)=0$ when $f$
is a cutoff function) just to get some idea of the size of the remainder. In
fact both $D$ and $\Lambda$ have the physical dimension of a mass, and there
is no absolute scale on which they can be measured. The ratio $D/\Lambda$ is
dimensionless and the dimensionless number that governs the quality of the
approximation \eqref{heat1} can be chosen to just be the number $N(\Lambda)$
of eigenvalues $\lambda$ of $D$ whose size is less than $\Lambda$,
\textit{i.e.\/}\ $|\lambda|\leq\Lambda$. When $f$ is a cutoff function the
size of the error term in \eqref{heat1} should be $O(N^{-k})$ for any positive
$k$, using the flatness of the Taylor expansion of $f$ at $0$. In the case of
interest, where $M$ is the Euclidean space-time, a rough estimate of the size
of $N$ is the $4$-dimensional volume of $M$ in Planck units \textit{i.e.\/}%
\ an order of magnitude\footnote{using the age of the universe in Planck units
to estimate the spatial Euclidean directions and the inverse temperature
$\beta=1/kT$ also in Planck units, to set the size of the imaginary time
component of the Euclidean $M$.} of $N\sim10^{214}$(at the present radius, see
section two for details). Thus, even without the vanishing of $f^{\prime
\prime}(0)$, the rough error term $N^{-1/2}\sim10^{-107}$ is quite small in the
approximation of the spectral action by its local version \eqref{heat1}. We shall
in fact show that a much better estimate holds in the simplified model of Euclidean
 space-time given by the product $S_a^3\times S^1_\beta$. Another advantage of the
above spectral description of the gravitational action is that one can now use
the same action $\mathrm{Tr}(f(D/\Lambda))$ for spaces which are not
Riemannian. The simplest case is the product of a Riemannian geometry $M$ (of
dimension $4$) by the finite space $F$ of \eqref{einsym}. The only new term
that appears is the Yang-Mills action functional of the $\mathrm{SU}(k)$ gauge
fields which form the inner part of the metric. This new term appears as an
additional term in the coefficient $a_{4}$ of $\Lambda^{0}$, and with the
positive sign. In other words gravity on the slightly noncommutative space
$M\times F$ gives ordinary gravity minimally coupled with $\mathrm{SU}%
(k)$-Yang-Mills gauge theory. The latter theory is massless and the fermions
are in the adjoint representation. The fermionic part of the action is easy to
write since one has the operator $D$ whose inner fluctuations are
\begin{equation}
D_{A}=D+A+JAJ^{-1}\,,\ \ A=\sum a_{j}[D,b_{j}]\,,\ a_{j},b_{j}\in{\mathcal{A}%
}\,,\ A=A^{\ast}\label{innerfluct}%
\end{equation}
In the Einstein-Yang-Mills system so obtained, all fields involved are massless.

\medskip We now consider the product $M\times F$ of a Riemannian geometry $M$
(of dimension $4$) by the finite space $F$ of $KO$-dimension $6$ which was
determined above. The computation shows that (\textit{cf\/} \cite{mc2})

\begin{itemize}
\item The inner fluctuations of the metric give an $U(1)\times\mathrm{SU}%
(2)\times\mathrm{SU}(3)$ gauge field and a complex Higgs doublet scalar field.

\item The spectral action $\mathrm{Tr}(f(D/\Lambda))$ plus the antisymmetric
bilinear form $\langle J\xi,D_{A}\eta\rangle$ on chiral fermions, gives the
Standard Model minimally coupled to gravity, with the Majorana mass terms and
see-saw mechanism.

\item The gauge couplings fulfill the unification constraint, the Yukawa
couplings fulfill $Y_{2}=4g^{2}$, where $Y_{2}$ is defined in
eq(\ref{yequation}), and the Higgs quartic coupling also fulfills a
unification constraint.
\end{itemize}

Most of the new terms occur in the $a_{4}$ term of the expansion
\eqref{heat1}. This is the case for the minimal coupling of the Higgs field as
well as its quartic self-interaction. The terms $a_{0}$ and $a_{2}$ get new
contributions from the Majorana masses (\textit{cf\/} \cite{mc2}), but the
main new term in $a_{2}$ has the form of a mass term for the Higgs field with
the coefficient $-\Lambda^{2}$. This immediately raises the question of the
meaning of the specific values of the couplings in the above action
functional. Unlike the above massless Einstein-Yang-Mills system we can no
longer take the above action simply as a classical action, would it be because
of the unification of the three gauge couplings, which does not hold at low
scale. The basic idea proposed in \cite{cc2} is to consider the above action
as an effective action valid at the unification scale $\Lambda$ and use the
Wilsonian approach of integrating the high frequency modes to show that one
obtains a realistic picture after \textquotedblleft running down" from the
unification scale to the energies at which observations are done. This
approach is closely related to the approach of Reuter \cite{reuter}, Dou and
Percacci \cite{Dou}, \cite{Perc}. The coarse graining uses a much lower scale
$\rho$ which can be understood physically as the resolution with which the
system is observed. The modes with momenta larger than $\rho$ cannot be
directly observed and their effect is averaged out by the functional integral.
In fact the way the renormalization group is computed in \cite{Dou} shows that
the derivative $\rho\partial_{\rho}\Gamma_{\rho}$ of the effective action is
expressed as a trace of an operator function of the propagators and is thus of
a similar nature as the spectral action itself, though the trace involves all
fields and not just the spin $\frac{1}{2}$ fields as in the spectral action.
It is an open question to compute the renormalization group flow for the
spectral action in the context of spectral triples. One expects, as explained
above, that new terms involving traces of functions of the bosonic
propagator\footnote{We thank John Iliopoulos for discussions on this point.}
$\frac{\delta^{2}}{\delta D\delta D}\mathrm{Tr}(f(D/\Lambda))$ will be
generated. The idea of taking the spectral action as a boundary condition of
the renormalization group at unification scale generates a number of severe
tests. The first ones involve the dimensionless couplings. These include

\begin{enumerate}
\item The three gauge couplings

\item The Yukawa couplings

\item The Higgs quartic coupling
\end{enumerate}

As is well known, the gauge couplings do not unify in the Standard Model but
the meeting of $g_{2}$ and $g_{3}$ specifies a \textquotedblleft unification"
scale of $\sim10^{17}$ GeV. For the Yukawa couplings the boundary condition
gives
\begin{equation}
Y_{2}=4\,g^{2},\ \ \ Y_{2}=\sum_{\sigma}\,(y_{\nu}^{\sigma})^{2}%
+(y_{e}^{\sigma})^{2}+3\,(y_{u}^{\sigma})^{2}+3\,(y_{d}^{\sigma})^{2}.
\label{yequation}%
\end{equation}
This yields a value of the top mass which is $1.04$ times the observed value
when neglecting\footnote{See \cite{mc2} for the precise satement.} the Yukawa couplings of the bottom quarks etc...and is hence
compatible with experiment. The Higgs quartic coupling (scattering parameter)
has the boundary condition of the form:
\[
\tilde{\lambda}(\Lambda)=\,g_{3}^{2}\,\frac{b}{a^{2}}\sim g_{3}^{2}%
\]
The numerical solution to the RG equations with the boundary value
$\lambda_{0}=0.356$ at $\Lambda=10^{17}$ GeV gives $\lambda(M_{Z})\sim0.241$
and a Higgs mass of the order of $170$ GeV. This value now seems to be ruled
out experimentally but this might simply be a clear indication of the presence
of some new physics, instead of the \textquotedblleft big desert" which is
assumed here in the huge range of energies between $10^{2}$ GeV and $10^{17}$
GeV. To be more precise the above \textquotedblleft prediction" of the Higgs
mass is in perfect agreement with the one of the Standard Model, when one
assumes the \textquotedblleft big desert" (\textit{cf\/} \cite{Hambey}). In a
forthcoming paper \cite{hidden} we show that the choice of the spectral
function $f$ could play an important role, even when it varies slightly from
the cutoff function. This is related to the fact that the vev of the Higgs
field is proportional to the scale $\Lambda$ and thus higher order corrections
do contribute. This will cause the relation between the gauge coupling
constants to be modified and to change the Higgs potential. Such gravitational
corrections are known to cause sizable changes to the Higgs mass
\cite{sturmia}.

The next tests involve the dimensionful couplings. These include

\begin{enumerate}
\item The inverse Newton constant $Z_{g}=1/G$.

\item The mass term of the Higgs.

\item The Majorana mass terms.

\item The cosmological constant.
\end{enumerate}

Since our action functional combines gravity and the Standard Model, the
analysis of \cite{Dou} applies, and the running of the couplings $Z$ which
have the physical dimension of the square of a mass is well approximated by
$\beta_{Z}=a_{1}k^{2}$ where the parameter $k$ is fixing the cutoff scale but
is considered itself as one of the couplings, while the coefficient $a_{1}$ is
a dimensionless number of order one. For the inverse $Z_{g}$ of the Newton
constant, one gets the solution:%

\begin{equation}
Z_{g}=\bar{Z}_{g}(1+\frac{1}{2}a_{1}\frac{k^{2}}{\bar{Z}_{g}}) \label{runzg1}%
\end{equation}
which behaves like a constant and shows that the change in $Z_{g}$ is moderate
between the low energy value $\bar{Z}_{g}$ at $k=0$ and its value at $k=m_{P}$
the Planck scale, for which $\frac{k^{2}}{\bar{Z}_{g}}=1$. We have shown in
\cite{mc2} that a relation between the moments of the cutoff function $f$
involved in the spectral action, of the form $f_{2}\sim5f_{0}$ suffices to
give a realistic value of the Newton constant, provided one applies the
spectral action at the unification scale $\Lambda\sim10^{17}$ GeV. The above
discussion of the running of $Z_{g}$ shows that this yields a reasonable low
energy value of the Newton constant $G$.

The form $\beta_{Z}=a_{1}k^{2}$ of the running of a coupling with mass$^{2}$
dimension implies that, as a rule, even if this coupling happens to be small
at low scale, it will necessarily be of the order of $\Lambda^{2}$ at
unification scale. For the Majorana mass terms, we explained in \cite{mc2} why
they are of the order of $\Lambda^{2}$ at unification and their role in the
see-saw mechanism shows that one should not expect them to be small at small
scale, thus a running like \eqref{runzg1} is realistic. Things are quite
different for the mass term of the Higgs. The spectral action delivers a huge
mass term of the form $-\Lambda^{2}H^{2}$ and one can check that it is
\emph{consistent} with the sign and order of magnitude of the quadratic
divergence of the self-energy of this scalar field. However though this shows
compatibility with a small low energy value it does by no means allow one to
justify such a small value. Giving the term $-\Lambda^{2}H^{2}$ at unification
scale and hoping to get a small value when running the theory down to low
energies by applying the renormalization group, one is facing a huge fine
tuning problem. Thus one should rather try to find a physical principle to
\emph{explain} why one obtains such a small value at low scale. In the
noncommutative geometry model $M\times F$ of space-time the size of the finite
space $F$ is governed by the inverse of the Higgs mass. Thus the above problem
has a simple geometric interpretation: \emph{Why is the space $F$ so
large\footnote{by a factor of $10^{16}$.} in Planck units?} There is a
striking similarity between this problem and the problem of the large size of
space in Planck units. This
suggests that it would be very worthwhile to develop  cosmology
in the context of the noncommutative geometry model of space-time, with in particular the
preliminary step of the Lorentzian formulation of the spectral action.

This also brings us to the important role played by the dilaton field which
determines the scale $\Lambda$ in the theory. The spectral action is taken to
be a function of the twisted Dirac operator so that $D^{2}$ is replaced with
$e^{-\phi}D^{2}e^{-\phi}.$ In \cite{dilaton} we have shown that the spectral
action is scale invariant, except for the dilaton kinetic energy. Moreover,
one can show that after rescaling the physical fields, the scalar potential of
the theory will be independent of the dilaton at the classical level. At the
quantum level, the dilaton acquires a Coleman-Weinberg potential \cite{CW} and
will have a vev of the order of the Planck mass \cite{Bubu}. The fact that the
Higgs mass is damped by a factor of $e^{-2\phi}$, can be the basis of an
explanation of the hierarchy problem.

In this paper we investigate the accuracy of the approximation of the spectral action
by the first terms of its asymptotic expansion. We consider the concrete example given by
the four-dimensional geometry $S_a^{3}\times S_\beta^{1}$ where $S_a^{3}$ is the
round sphere of radius $a$ as a model of space, while $S_\beta^{1}$ is
a circle of radius $\beta$ viewed as a model of imaginary periodic time at inverse
temperature $\beta$. We compute directly the spectral action and compare it with the
sum of the first terms of the asymptotic expansion. In section two we start with the round sphere $S_a^{3}$ and
use the known spectrum of the Dirac operator together with the Poisson summation formula, to estimate  the
remainder when using a smooth test function. This is then applied to the
four-dimensional space $S_a^{3}\times S_\beta^{1}$ where it is shown that, for natural
test functions, the spectral
action is completely determined by the first two terms, with an error of the order of
 $10^{-\sigma^2}$ where $\sigma$ is the inner diameter  $\Lambda\mu$, $\mu=\inf (a,\beta)$ in units
 of the cutoff $\Lambda$. Thus for instance an inner diameter of $10$ in cutoff units yields the accuracy of the
 first hundred decimal places, while an inner diameter of $10^{31}$ corresponding to the visible universe
 at inverse temperature of $3$ Kelvin and a cutoff at Planck scale\footnote{while the age of the
universe in Planck units gives $\Lambda a \sim 10^{61}$.}, yields an astronomical precision of $10^{62}$ accurate  decimal places.
 This is then extended in the presence of Higgs fields.
  The above direct computation allows one to double check coefficients in the spectral action. It also implies,
   for $S_a^{3}\times S_\beta^{1}$, the vanishing of all the Seeley-De Witt coefficients $a_{2n}$, $n\geq 2$, in the heat
  expansion of the square of the Dirac operator. This is confirmed in section three, by a local
   computation of the heat kernel expansion, where it is shown that $a_4$ and $a_6$ vanish due to subtle
   cancelations.

\bigskip

\section{Estimate of the asymptotics}

\label{estispecaction}

The number $N(\Lambda)$ of eigenvalues of $|D|$ which are $\leq\Lambda$
\begin{equation}
N(\Lambda) = \# \ \hbox{eigenvalues of $D$ in} \ [-\Lambda,\Lambda] ,
\end{equation}
is a step function $N(\Lambda)$ which jumps by the integer multiplicity of an eigenvalue
 whenever $\Lambda$ belongs to the spectrum of $|D|$. This integer valued function is the superposition of two terms,
\[
N(\Lambda)= \langle N(\Lambda) \rangle+N_{\mathrm{osc}} (\Lambda).
\]
The oscillatory part $N_{\mathrm{osc}} (\Lambda)$ is generically the same as for a random
matrix. The average part $\langle N(\Lambda) \rangle$ is
computed by a semiclassical approximation from local expressions involving the
familiar heat equation expansion and will now be carefully defined assuming
an expansion of the form\footnote{the $a_{\alpha}$
defined here is equal to the Seeley-de Witt coefficients $a_{n+2\alpha}$ in dimension $n$.}
\begin{equation}
\mathrm{Trace}\,(e^{-t\Delta})\sim\sum\,a_{\alpha}\,t^{\alpha}\qquad
(t\rightarrow0) \label{eq1.69}%
\end{equation}
for the positive operator $\Delta=D^{2}$. One has,
\begin{equation}
\Delta^{-s/2}=\frac{1}{\Gamma\left(  \frac{s}{2}\right)  }\int_{0}^{\infty
}e^{-t\Delta}\,t^{s/2-1}\,dt \label{eq1.68}%
\end{equation}
and the relation between the asymptotic expansion \eqref{eq1.69}
and the $\zeta$ function,
\begin{equation}
\zeta_{D}(s)=\mathrm{Trace}\,(\Delta^{-s/2}) \label{eq1.70}%
\end{equation}
is given by,

\begin{itemize}
\item $\alpha<0$ gives a \textit{pole} at $-2\alpha$ for $\zeta_{D}$ with
\begin{equation}
\mathrm{Res}_{s=-2\alpha}\,\zeta_{D}(s)=\frac{2\,a_{\alpha}}{\Gamma(-\alpha)}
\label{eq1.71}%
\end{equation}

\item $\alpha= 0$ (no $\log t$ term) gives regularity  at 0 for $\zeta_{D}$
with
\begin{equation}
\label{eq1.72}\zeta_{D} (0) = a_{0}\,.
\end{equation}

\end{itemize}

For simple superpositions of exponentials, as Laplace transforms,
\begin{equation}
f(u)=\int_{0}^{\infty}e^{-su}\,h(s)\,ds \label{eq1.73}%
\end{equation}
we can write formally,
\begin{equation}
f(t\Delta)=\int_{0}^{\infty}e^{-st\Delta}\,h(s)\,ds \label{eq1.74}%
\end{equation}
and
\begin{equation}
\mathrm{Trace}\,(f(t\Delta))\sim\sum\,a_{\alpha}\,t^{\alpha}\int_{0}^{\infty
}s^{\alpha}\,h(s)\,ds\,. \label{eq1.75}%
\end{equation}
For $\alpha<0$ one has,
\[
s^{\alpha}=\frac{1}{\Gamma(-\alpha)}\int_{0}^{\infty}e^{-sv}\,v^{-\alpha
-1}\,dv
\]
and
\[
\int_{0}^{\infty}s^{\alpha}\,h(s)\,ds=\frac{1}{\Gamma(-\alpha)}\int
_{0}^{\infty}f(v)\,v^{-\alpha-1}\,dv
\]
so that
\begin{align}
\mathrm{Trace}\,(f(t\Delta))  &  \sim\sum_{\alpha<0}\ \frac{1}{2}%
\ \mathrm{Res}_{s=-2\alpha}\,\zeta_{D}(s)\int_{0}^{\infty}f(v)\,v^{-\alpha
-1}\,dv\,t^{\alpha}\nonumber\\
&  +\zeta_{D}(0)\,f(0)+\sum_{\alpha>0}\,a_{\alpha}\,t^{\alpha}\int_{0}%
^{\infty}s^{\alpha}\,h(s)\,ds\,. \label{eq1.76}%
\end{align}
Now we assume that the only $\alpha>0$ for which $a_\alpha\neq 0$ are integers and
note that
\begin{equation}
\int_{0}^{\infty}s^{n}\,h(s)\,ds=(-1)^{n}\,f^{(n)}(0)\,, \label{eq1.77}%
\end{equation}
so that all the terms $a_\alpha$ for $\alpha>0$ have vanishing coefficients
 when $f$ is a cutoff function which is constant equal to $1$ in a
 neighborhood of $0$. To define the average part we consider the limit
  case $f(v)=1$ for $|v|\leq1$ and 0 elsewhere and get for the coefficients of \eqref{eq1.76}
\begin{equation}
\frac{1}{2}%
\ \int_{0}^{\infty}f(v)\,v^{-\alpha
-1}\,dv\,t^{\alpha}=\frac{t^\alpha}{(-2\alpha)}\,, \label{eq1.78}%
\end{equation}
which, with $t=\Lambda^{-2}$, gives the following \emph{definition} for the average part
\begin{equation}
\langle N(\Lambda)\rangle:=\sum_{k>0}\,\frac{\Lambda^{k}}{k}\,\mathrm{Res}%
_{s=k}\,\zeta_{D}(s)+\zeta_{D}(0)\,. \label{eq1.98}%
\end{equation}
To get familiar with this definition we shall work out its meaning in a simple case,

\begin{proposition}
\label{theorem1}Assume that \textrm{Spec }$D\subset\mathbb{Z}$ and that
 the total multiplicity of $\left\{  \pm n\right\}  $ is $P\left(  n\right)
$ for a polynomial $P(x)=\sum\,c_{k}\,x^{k}$. Then
one has
\[
\left\langle N\left(  \Lambda\right)  \right\rangle =%
{\displaystyle\int_{0}^{\Lambda}}
P\left(  u\right)  du+ c \,, \ \  c=\sum\;c_{k}\;\zeta({-k})\,,
\]
where $\zeta$ is the Riemann zeta function.
\end{proposition}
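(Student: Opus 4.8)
The plan is to read everything off the spectral zeta function $\zeta_D(s)=\mathrm{Trace}\,(|D|^{-s})$ and then feed its poles, residues, and value at $0$ directly into the definition \eqref{eq1.98} of the average part. The starting point is that the hypotheses pin down $\zeta_D$ completely: since every eigenvalue lies in $\mathbb{Z}$ and the pair $\{\pm n\}$ carries total multiplicity $P(n)$, each nonzero eigenvalue of absolute value $n\geq 1$ contributes $n^{-s}$ with that multiplicity, so that for $\mathrm{Re}\,s$ large
\begin{equation}
\zeta_D(s)=\sum_{n\geq 1}P(n)\,n^{-s}=\sum_k c_k\sum_{n\geq 1}n^{k-s}=\sum_k c_k\,\zeta(s-k),
\end{equation}
the interchange of the \emph{finite} sum over $k$ with the sum over $n$ being harmless. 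Any zero eigenvalue is excluded from $\Delta^{-s/2}$ and so does not enter.

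The representation $\zeta_D(s)=\sum_k c_k\,\zeta(s-k)$ is the whole engine of the argument, and I would exploit it three times. First it continues $\zeta_D$ meromorphically to all of $\mathbb{C}$. Next, since $\zeta(w)$ has a single simple pole at $w=1$ of residue $1$, the summand $c_k\,\zeta(s-k)$ has a simple pole at $s=k+1$ of residue $c_k$, and these locations are distinct for distinct $k$, so
\begin{equation}
\mathrm{Res}_{s=k+1}\,\zeta_D(s)=c_k .
\end{equation}
Finally, because all poles sit at $s=k+1\geq 1$, the point $s=0$ is regular and termwise evaluation of the finite sum is legitimate, giving $\zeta_D(0)=\sum_k c_k\,\zeta(-k)=c$; this already produces the additive constant in the statement.

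It then remains to substitute into \eqref{eq1.98}. Indexing the poles by their location $k+1$ with residue $c_k$, the residue sum becomes
\begin{equation}
\sum_{k\geq 0}\frac{\Lambda^{k+1}}{k+1}\,c_k=\int_0^\Lambda\Big(\sum_k c_k\,u^k\Big)\,du=\int_0^\Lambda P(u)\,du,
\end{equation}
so that $\langle N(\Lambda)\rangle=\int_0^\Lambda P(u)\,du+\zeta_D(0)=\int_0^\Lambda P(u)\,du+c$, as claimed. I do not expect a genuine obstacle here; the only points needing care are the index shift between the polynomial degree $k$ and the pole location $k+1$, and the observation that the power-sum prefactors $\Lambda^{k+1}/(k+1)$ appearing in \eqref{eq1.98} are precisely $\int_0^\Lambda u^k\,du$, which is exactly what converts the residue sum into $\int_0^\Lambda P(u)\,du$.
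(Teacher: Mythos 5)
Your proposal is correct and follows exactly the paper's own proof: both compute $\zeta_D(s)=\sum_n P(n)\,n^{-s}=\sum_k c_k\,\zeta(s-k)$, read off the residue $c_{k-1}$ at $s=k$ and the value $\zeta_D(0)=\sum_k c_k\,\zeta(-k)$, and substitute into the definition \eqref{eq1.98}, recognizing $\Lambda^{k}/k$ times the residues as the antiderivative of $P$. Your version merely spells out details the paper leaves implicit (meromorphic continuation, regularity at $s=0$, the index shift), so there is nothing further to reconcile.
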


\begin{proof}
One has by construction, with $P(x)=\sum\,c_{k}\,x^{k}$,
\[
\zeta_{D}(s)=\sum\;P(n)\;n^{-s}=\,\sum\;c_{k}\;\zeta({s-k})
\]
 Thus
\[
\mathrm{Res}_{s=k}^{{}}\,\zeta_{D}(s)=\,c_{k-1}%
\]
and
\[
\langle N(\Lambda)\rangle:=\sum_{k>0}\,\frac{\Lambda^{k}}{k}\,\,c_{k-1}%
+\zeta_{D}(0)\,.
\]
The constant $\zeta_{D}(0)$ is given by
\[
\sum\;c_{k}\;\zeta({-k})
\]
and is independent of $\Lambda$.
\end{proof}

\subsection{The sphere $S^4$}
We check the hypothesis of Proposition \ref{theorem1} for a round even sphere.
We recall (\cite{trautman}) that the spectrum of the Dirac operator for the round sphere $S^{n}$ of unit
radius is given by
\begin{equation}
\mathrm{Spec}(D)=\{\pm(\frac{n}{2}+k)\;|\;k\in{\mathbb{Z}},k\geq 0\} \label{spD}%
\end{equation}
where the multiplicity of $(\frac{n}{2}+k)$ is equal to $2^{[\frac{n}{2}%
]}{\binom{k+n-1}{k}}$. Thus for $n=4$ one gets that the spectrum consists of
the relative integers, except for $\{-1,0,1\}$. The multiplicity of the
eigenvalue $m$ is $4{\binom{k+3}{k}}$ for $k+2=m$ which gives, for the total
multiplicity of $\pm m$
\[
P(m)=\frac{4}{3}(m+1)m(m-1)=\frac{4}{3}(m^{3}-m)
\]
which shows that one gets the correct minus sign for the scalar curvature term
after integration using Proposition \ref{theorem1}.  Thus one gets (up to the normalization
factor $\frac{4}{3}$ )
\begin{equation}
\mathrm{Tr}(|D|^{-s})=\zeta(s-3)-\zeta(s-1)\label{4sphere}%
\end{equation}
This function has a  value at $s=0$ given by
\[
\zeta(-3)-\zeta(-1)=\frac{1}{120}+\frac{1}{12}=\frac{11}{120}%
\]
which, taking into account the factor $\frac 43$ from normalization,
 matches
the coefficient $\frac{11}{360}\times 4$ which appears in the spectral action in front
of the Gauss-Bonnet term, as will be shown in  \S \ref{seeley}.

\subsection{The sphere $S^3$}

We now want to look at the
case of $S^{3}$ and determine how good the approximation of \eqref{eq1.76} is for test
functions.

In order to estimate the remainder of \eqref{eq1.76} in this special
case we shall use the Poisson summation formula
\begin{equation}
\sum_{\mathbb{Z}}h(n)=\sum_{\mathbb{Z}}\hat{h}(n)\,,\ \ \ \hat{h}%
(x)=\int_{\mathbb{R}}h(u)e^{-2\pi ixu}du \label{Poisson}%
\end{equation}
or rather, since the spectrum is $\frac{1}{2}+{\mathbb{Z}}$ in the odd case, the variant
\begin{equation}
\sum_{\mathbb{Z}}g(n+\frac{1}{2})=\sum_{\mathbb{Z}}(-1)^{n}\hat{g}(n)
\label{Poisson1}%
\end{equation}
 (obtained from \eqref{Poisson} using  $h(u)=g(u+\frac{1}{2})$).

 In the case of the three sphere, the
eigenvalues are $\pm(\frac{3}{2}+k)$, for $k\geq0$ with the multiplicity
$2{\binom{k+2}{k}}$. Thus $n+\frac{1}{2}$ has multiplicity $n(n+1)$. This
holds not only for $n\geq0$ but also for $n\in{\mathbb{Z}}$ since the
multiplicity of $-(n+\frac{1}{2})$ is $n(n+1)=m(m+1)$ for $m=-n-1$. In
particular $\pm\frac{1}{2}$ is not in the spectrum. Thus when we evaluate
$\mathrm{Tr}(f(D/\Lambda))$, with $f$ an even function, we get the following
sum
\begin{equation}
\mathrm{Tr}(f(D/\Lambda))=\sum_{\mathbb{Z}}n(n+1)f((n+\frac{1}{2})/\Lambda )
\label{specact}%
\end{equation}
 We apply \eqref{Poisson1} with $g(u)=(u^{2}-\frac{1}%
{4})f(u/\Lambda)$. The Fourier transform of $g$ is
\[
\hat{g}(x)=\int_{\mathbb{R}}g(u)e^{-2\pi ixu}du=\int_{\mathbb{R}}(u^{2}%
-\frac{1}{4})f(u/\Lambda )e^{-2\pi ixu}du
\]%
\[
=   \Lambda^{3}\int_{\mathbb{R}}v^{2}f(v)e^{-2\pi i\Lambda
xv}dv-\frac{1}{4} \Lambda   \int_{\mathbb{R}}f(v)e^{-2\pi
i\Lambda xv}dv
\]
We introduce the function $\hat{f}^{(2)}$ which is the Fourier transform of
$v^{2}f(v)$ and we thus get from \eqref{Poisson1},
\begin{equation}
\mathrm{Tr}(f(D/\Lambda))= \Lambda^{3}\sum_{\mathbb{Z}%
}(-1)^{n}\hat{f}^{(2)}(\Lambda n)-\frac{1}{4}  \Lambda
\sum_{\mathbb{Z}}(-1)^{n}\hat{f}(\Lambda n) \label{Poissonbis}%
\end{equation}
If we take the function $f$ in the Schwartz space ${\mathcal{S}}({\mathbb{R}%
})$, then both $\hat{f}$ and $\hat{f}^{(2)}$ have rapid decay and we can
estimate the sums
\[
 \sum_{n\neq0}|\hat{f}(\Lambda n)|\leq
C_{k}\Lambda^{-k} \,,\ \ \sum_{n\neq0}|\hat{f}^{(2)}(\Lambda  n)|\leq C_{k}\Lambda^{-k}%
\]
which gives, for any given $k$, an estimate for a sphere of radius $a$ of the form:
\begin{equation}
\mathrm{Tr}(f(D/\Lambda))=\left(  \Lambda a\right)  ^{3}\int_{\mathbb{R}}%
v^{2}f(v)dv-\frac{1}{4}\left(  \Lambda a\right)  \int_{\mathbb{R}%
}f(v)dv+O(\left(  \Lambda a\right)  ^{-k}) \label{Poissonfine}%
\end{equation}
The radius simply rescales $D$ and enters in such a way as
to make the product $\Lambda a$ dimensionless. This can be seen by noting that
the ratio $\frac{D}{\Lambda}$ contains the term $\frac{1}{\Lambda}e_{\alpha}^{\mu
}\gamma^{\alpha}\partial_{\mu}$ and the radius enters as $\frac{1}{a}$ in the
inverse dreibein $e_{\alpha}^{\mu}.$
Note that, provided that $k>1$
 one controls the constant in front of $(\Lambda a)^{-k}$ from the
constants $c_j$ with
\[
|x^k\hat{f}(x)|\leq
c_{1} \,,\ \   |x^k\hat{f}^{(2)}(x)|\leq c_{2} \,.
\]
To get an estimate of these constants $c_{j}$, say for $k=2$, one can use the
$L^{1}$-norms of the functions $\Delta f(v)$ and $\Delta ( v^{2}f(v))$
where $\Delta=-\partial_{v}^{2}$ is the Laplacian. If we take for $f$ a smooth
cutoff function we thus get that the $c_{j}$ are of order one.

In fact we shall soon get a much better estimate (Corollary \ref{cor1} below) which will show
that, for suitable test functions, a size of $N$ in cutoff units, $\Lambda a \sim N$, already
ensures a precision of the order of $e^{-N^2}$. We shall work directly with the physically more relevant model consisting of the product $S^3\times S^1$  viewed as a model of the imaginary time periodic compactification of space-time at a given temperature. Our estimates will work well for a size in cutoff units as small as $N\sim 10$ and will give the result with an astronomical precision for larger values. These correspond to later times since both the radius of space and the inverse temperature are increasing functions of time in this simple model.

\subsection{The product $S^3\times S^1$}

We now want to move to the $4$-dimensional Euclidean case obtained by taking
the product $M=S^{3}\times S^{1}$ of $S^{3}$ by a small circle. We take the
product geometry of a three dimensional geometry with Dirac operator $D_{3}$
by the one dimensional circle geometry with Dirac
\begin{equation}
\label{Done}D_{1}=\, \frac{1}{\beta}\,i\,\nabla_{\theta}%
\end{equation}
so that the spectrum of $D_{1}$ is $\frac1\beta({\mathbb{Z}}+\frac12)$.

\begin{lemma}
\label{theoerem2}Let $D$ be the Dirac operator of the product geometry
\begin{equation}
D=\left(
\begin{array}
[c]{cc}%
0 & D_{3}\otimes1+\,i\otimes D_{1}\\
D_{3}\otimes1-i\otimes D_{1} & 0
\end{array}
\right)  \label{prodD}%
\end{equation}
The asymptotic expansion for $\Lambda\rightarrow\infty$ of the spectral action
of $D$ is given by
\begin{equation}
\mathrm{Tr}(h(D^{2}/\Lambda^{2}))\sim\,2\,\beta\,\Lambda\,\mathrm{Tr}%
(k(D_{3}^{2}/\Lambda^{2}))\,, \label{specD}%
\end{equation}
where the function $k$ is given by
\begin{equation}
k(x)=\,\int_{x}^{\infty}\,(u-x)^{-1/2}\,h(u)\,du \label{transg}%
\end{equation}

\end{lemma}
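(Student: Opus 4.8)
The plan is to reduce \eqref{specD} to an exact algebraic computation of $D^2$ followed by one application of the half-integer Poisson summation formula \eqref{Poisson1} in the circle direction, reusing the Schwartz-decay mechanism that already produced \eqref{Poissonfine}. First I would square the off-diagonal operator in \eqref{prodD}. Writing $A=D_3\otimes 1+i(1\otimes D_1)$ and $B=D_3\otimes 1-i(1\otimes D_1)$, the two diagonal blocks of $D^2$ are $AB$ and $BA$. Because $D_3\otimes 1$ commutes with $1\otimes D_1$, the cross terms cancel and both blocks equal
\[
\Delta=D_3^{2}\otimes 1+1\otimes D_1^{2}.
\]
Thus $D^2$ is simply two copies of $\Delta$, which accounts exactly for the overall factor $2$ in \eqref{specD}, reducing the claim to the statement that $\mathrm{Tr}(h(\Delta/\Lambda^2))\sim\beta\Lambda\,\mathrm{Tr}(k(D_3^2/\Lambda^2))$.

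Next I would diagonalize along the circle. Since $\mathrm{Spec}(D_1)=\tfrac1\beta(\mathbb{Z}+\tfrac{1}{2})$, the tensor product structure gives the \emph{exact} identity
\[
\mathrm{Tr}(h(\Delta/\Lambda^2))=\sum_{m\in\mathbb{Z}}\mathrm{Tr}_{3}\!\left(h\!\left(\frac{D_3^{2}}{\Lambda^2}+\frac{(m+\tfrac{1}{2})^{2}}{(\beta\Lambda)^{2}}\right)\right),
\]
where $\mathrm{Tr}_3$ denotes the trace on the $S^3$ Hilbert space. Treating $x=D_3^2/\Lambda^2$ as a spectral parameter, I would apply \eqref{Poisson1} to $\phi(t)=h\!\left(x+t^2/(\beta\Lambda)^2\right)$, so that $\sum_m\phi(m+\tfrac{1}{2})=\sum_n(-1)^n\hat\phi(n)$. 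The $n=0$ term is $\hat\phi(0)=\int_{\mathbb{R}}h\!\left(x+t^2/(\beta\Lambda)^2\right)dt$; the substitution $s=t/(\beta\Lambda)$ gives $\beta\Lambda\int_{\mathbb{R}}h(x+s^2)\,ds$, and since the integrand is even in $s$ the further substitution $u=x+s^2$ identifies this with $\int_x^\infty(u-x)^{-1/2}h(u)\,du=k(x)$, precisely \eqref{transg}. Summing over the $S^3$ spectrum with its multiplicities then reconstructs $\beta\Lambda\,\mathrm{Tr}(k(D_3^2/\Lambda^2))$, as required.

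The substance of the proof is the control of the terms $n\neq0$, and this is the step I expect to be the main obstacle. Each such term equals $\beta\Lambda\int_{\mathbb{R}}h(x+s^2)e^{-2\pi i n\beta\Lambda s}\,ds$, that is, $\beta\Lambda$ times the Fourier transform of the smooth Schwartz-class function $s\mapsto h(x+s^2)$ evaluated at the large frequency $n\beta\Lambda$; repeated integration by parts (equivalently, the rapid decay of this Fourier transform) bounds it by any power $(\beta\Lambda)^{-k}$. The delicate point is uniformity in the spectral parameter $x$: one must make the bound hold uniformly and then sum it against the polynomially growing multiplicities of $D_3^2$ and over $n\neq0$, checking that the total still decays faster than any power of $\Lambda$ so that it does not enter the asymptotic expansion. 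I would settle this exactly as in the passage leading to \eqref{Poissonfine}, estimating the relevant sums by $L^1$-norms of $\Delta f$ and $\Delta(v^2f)$ type quantities; this is also the mechanism that the sharper subsequent corollary upgrades to the exponentially small remainder announced in the introduction.
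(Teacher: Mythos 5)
Your proposal is correct, but it takes a genuinely different route from the paper's. The paper proves the lemma by reducing, via linearity of both sides in $h$ (through the Laplace-transform representation \eqref{eq1.73}), to the single case $h(x)=e^{-bx}$, for which the trace factorizes exactly, $\mathrm{Tr}(e^{-b\,D^{2}/\Lambda^{2}})=2\,\mathrm{Tr}(e^{-b\,D_{1}^{2}/\Lambda^{2}})\,\mathrm{Tr}(e^{-b\,D_{3}^{2}/\Lambda^{2}})$; Poisson summation \eqref{Poisson1} is then applied only to the scalar Gaussian sum over $\frac{1}{\beta}({\mathbb{Z}}+\frac{1}{2})$, giving $\sqrt{\pi}\,\beta\Lambda\,b^{-1/2}+O(\Lambda^{-k})$, and the identity $\int_{x}^{\infty}(u-x)^{-1/2}e^{-bu}\,du=\sqrt{\pi}\,b^{-1/2}e^{-bx}$ identifies $k$. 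You instead keep a general Schwartz $h$, fiber the trace over the circle modes, and perform a partial Poisson summation in the circle variable with the $S^{3}$ spectral parameter $x$ as a bystander; your $n=0$ term reproduces $\beta\Lambda\,k(x)$ exactly as in \eqref{transg1}, and your block computation of $D^{2}$ (giving the overall factor $2$) agrees with the paper's. The uniformity issue you flag is real but resolves as you expect: since $h$ is Schwartz, $\partial_{s}^{k}\,h(x+s^{2})$ decays rapidly jointly in $x$ and $s$, so the rapid-decay constants for $\hat{\phi}$ carry an extra factor $(1+x)^{-l}$, and summing over $n\neq 0$ and over the $S^{3}$ spectrum with multiplicities $n(n+1)$ costs only a factor $O((\Lambda a)^{3})$, leaving a total remainder $O(\Lambda^{4-k})$ for every $k$ — negligible for the asymptotic expansion, and of exactly the same bookkeeping type as the paper's $O(\Lambda^{-k+3})$. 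What each approach buys: the paper's reduction makes the analytic step trivial (a theta-like Gaussian sum), covers the Laplace-transform class of test functions, but is somewhat formal in the superposition step (one needs the $O(\Lambda^{-k})$ estimate uniformly in $b$ to integrate against the Laplace density); your route avoids the Laplace representation, yields an explicit uniform remainder for all Schwartz $h$, and is in effect the one-variable precursor of the paper's own proof of Theorem \ref{theorem3}, where the two-dimensional Poisson formula \eqref{Poisson2} is applied to $S_{a}^{3}\times S_{\beta}^{1}$ directly. Note finally that your use of \eqref{Poisson1} is legitimate precisely because $s\mapsto h(x+s^{2})$ is smooth ($h$ smooth composed with $s^{2}$) with rapid decay — the same mechanism whose failure for non-smooth functions such as $|x|e^{-tx^{2}}$ is, as the paper remarks, why this Poisson argument does not extend to $S^{4}$.
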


\begin{proof}
By linearity of both sides in the function $h$ (using the linearity of the
transformation \eqref{transg}) it is enough to prove the result for the
function $h(x)=\,e^{-bx}$. One has
\[
D^{2}=\left(
\begin{array}
[c]{cc}%
D_{3}^{2}\otimes1+1\otimes D_{1}^{2} & 0\\
0 & D_{3}^{2}\otimes1+1\otimes D_{1}^{2}%
\end{array}
\right)
\]
and
\[
\mathrm{Tr}(e^{-b\,D^{2}/\Lambda^{2}})=\,2\,\mathrm{Tr}(e^{-b\,D_{1}%
^{2}/\Lambda^{2}})\,\mathrm{Tr}(e^{-b\,D_{3}^{2}/\Lambda^{2}})
\]
Moreover by \eqref{Done} the spectrum of $D_{1}$ is $\frac{1}{\beta
}({\mathbb{Z}}+\frac{1}{2})$ so that, using \eqref{Poisson1}, and for fixed $\beta$ and $b$,
one has for all $k>0$,
\[
\mathrm{Tr}(e^{-b\,D_{1}^{2}/\Lambda^{2}})\sim\sqrt{\pi}\,\beta\,\Lambda
\,b^{-1/2}+O(\Lambda^{-k})\,.
\]
Thus
\[
\mathrm{Tr}(e^{-b\,D^{2}/\Lambda^{2}})=\,2\,\beta\,\Lambda
\,\mathrm{Tr}(\sqrt{\pi}\,b^{-1/2}\,e^{-b\,D_{3}^{2}/\Lambda^{2}})+O(\Lambda^{-k+3})
\]
and the equality \eqref{specD} follows from
\[
\int_{x}^{\infty}\,(u-x)^{-1/2}\,e^{-bu}\,du=\,\sqrt{\pi}\,b^{-1/2}\,e^{-bx}%
\]
which shows that the function $k$ associated to $h(x)=e^{-bx}$ by the linear
transformation \eqref{transg} is $k(x)=\sqrt{\pi}\,b^{-1/2}\,e^{-bx}$.
\end{proof}

One can write \eqref{transg} in the form
\begin{equation}
k(x)=\,\int_{0}^{\infty}\,v^{-1/2}\,h(x+v)\,dv\,, \label{transg1}%
\end{equation}
which shows that $k$ has right support contained in the right support of $h$
{\textit{i.e.\/}\ that if $h$ vanishes identically on $[a,\infty\lbrack$ so
does $k$. It also gives a good estimate of the derivatives of $k$ since
\[
\partial_{x}^{n}k(x)=\,\int_{0}^{\infty}\,v^{-1/2}\,\partial_{x}%
^{n}h(x+v)\,dv\,.
\]
In fact, in order to estimate the size of the remainder in the asymptotic expansion of the spectral action for the product
$M=S^{3}\times S^{1}$, we shall now use the two dimensional form of
\eqref{Poisson1},
\begin{equation}
\sum_{{\mathbb{Z}}^{2}}g(n+\frac{1}{2},m+\frac{1}{2})=\sum_{{\mathbb{Z}}^{2}%
}(-1)^{n+m}\hat{g}(n,m) \label{Poisson2}%
\end{equation}
where the Fourier transform is given by
\begin{equation}
\hat{g}(x,y)=\int_{{\mathbb{R}}^{2}}g(u,v)e^{-2\pi i(xu+yv)}dudv
\label{fourier}%
\end{equation}
For the operator $D$ of \eqref{prodD}, and taking for $D_3$ the Dirac operator of the $3$-sphere $S_a^3$ of radius $a$, the eigenvalues of
$D^{2}/\Lambda^{2}$ are obtained by collecting the following
\[
(\frac{1}{2}+n)^{2}\left(  \Lambda a\right)  ^{-2}+(\frac{1}{2}+m)^{2}\left(
\Lambda\beta\right)  ^{-2}\,, \ \ n,m\in \mathbb Z%
\]
with the multiplicity $2n(n+1)$ for each $n,m\in \mathbb Z$. Thus, more precisely
\[
\mathrm{Tr}(h(D^{2}/\Lambda^{2}))=\sum_{{\mathbb{Z}}^{2}}2n(n+1)h((\frac
{1}{2}+n)^{2}\left(  \Lambda a\right)  ^{-2}+(\frac{1}{2}+m)^{2}\left(
\Lambda\beta\right)  ^{-2})
\]
which is of the form:
\begin{equation}
\mathrm{Tr}(h(D^{2}/\Lambda^{2}))=\sum_{{\mathbb{Z}}^{2}}g(n+\frac{1}%
{2},m+\frac{1}{2}) \label{tracedsq}%
\end{equation}
where
\begin{equation}
g(u,v)=2(u^{2}-\frac{1}{4})h(u^{2}\left(  \Lambda a\right)  ^{-2}%
+v^{2}\left(  \Lambda\beta\right)  ^{-2}) \label{gfunctdef}%
\end{equation}
One has
\[
\hat{g}(0,0)=\int_{{\mathbb{R}}^{2}}g(u,v)dudv=2\int_{{\mathbb{R}}^{2}}%
(u^{2}-\frac{1}{4})h(u^{2}\left(  \Lambda a\right)  ^{-2}+v^{2}\left(
\Lambda\beta\right)  ^{-2})dudv
\]%
\[
=2\left(  \Lambda a\right)  \left(  \Lambda\beta\right)  \int_{{\mathbb{R}%
}^{2}}(\left(  \Lambda a\right)  ^{2}x^{2}-\frac{1}{4})h(x^{2}+y^{2})dxdy
\]
using }$u=x\left(  \Lambda a\right)  $ and $v=y\left(  \Lambda\beta\right)  .$
{Thus we get:
\begin{equation}
\hat{g}(0,0)=2\pi\left(  \Lambda\beta\right)  \left(  \Lambda a\right)
^{3}\int_{0}^{\infty}h(\rho^{2})\rho^{3}d\rho-\pi\left(  \Lambda\beta\right)
\left(  \Lambda a\right)  \int_{0}^{\infty}h(\rho^{2})\rho d\rho
\label{ghatzero}%
\end{equation}
To estimate the remainder, given by the sum
$$
\sum_{(n,m)\neq(0,0)}(-1)^{n+m}\hat{g}(n,m)
$$
 we treat separately the Fourier transforms of
\[
g_{1}(u,v)=u^{2}h(u^{2}\left(  \Lambda a\right)  ^{-2}+v^{2}\left(
\beta\Lambda\right)  ^{-2})\,,\ \ g_{2}(u,v)=h(u^{2}\left(  \Lambda a\right)
^{-2}+v^{2}\left(  \Lambda\beta\right)  ^{-2})
\]
One has
\[
\hat{g}_{2}(n,m)=\int_{{\mathbb{R}}^{2}}g_2(u,v)e^{-2\pi i(nu+mv)}dudv
\]%
\[
=\Lambda^2\beta a\int_{{\mathbb{R}%
}^{2}}h(x^{2}+y^{2})e^{-2\pi i(n\Lambda ax+m\Lambda\beta y)}dxdy=\Lambda^2\beta a
\kappa_2(n\Lambda a,m\Lambda\beta)
\]
where the function of two variables $\kappa_2(u,v)$ is the Fourier transform,
\begin{equation}\label{kappadef}
\kappa_2(u,v)=\int_{{\mathbb{R}}^{2}}h(x^{2}+y^{2})e^{-2\pi i(ux+vy)}dxdy=\kappa(u^2+v^2)
\end{equation}
The function $\kappa$ is related to the function $k(x)$  defined by \eqref{transg}, and one has
\begin{equation}\label{expect}
    \kappa(u^2)=\int_{{\mathbb{R}}}k(x^{2})e^{-2\pi iux}dx
\end{equation}
so that $\kappa(u^2)$ is the Fourier transform of $k(x^2)$.

For $g_1$ one has, similarly,
\[
\hat{g}_{1}(n,m)=\int_{{\mathbb{R}}^{2}}g_{1}(u,v)e^{-2\pi i(nu+mv)}dudv
\]%
\[
= \Lambda^4\beta a^3\int_{{\mathbb{R}%
}^{2}}x^{2}h(x^{2}+y^{2})e^{-2\pi i(n\Lambda ax+m\Lambda\beta y)}dxdy= \Lambda^4\beta a^3\kappa_1(n\Lambda a,m\Lambda\beta)
\]
where the function of two variables $\kappa_1(u,v)$ is the Fourier transform,
\[
\kappa_1(u,v)=\int_{{\mathbb{R}}^{2}}x^{2}h(x^{2}+y^{2})e^{-2\pi i(ux+vy)}dxdy
\]
which is given in terms of \eqref{kappadef} by
\begin{equation}\label{kappa1}
    \kappa_1(u,v)=-\pi^{-2}(u^2\kappa''(u^2+v^2)+\frac 12\kappa'(u^2+v^2))
\end{equation}

Now  for  any test
function $h$ in the Schwartz space ${\mathcal{S}
}({\mathbb{R}})$, the function $x^{2}h(x^{2}+y^{2})$ is in the Schwartz space ${\mathcal{S}%
}({\mathbb{R}}^{2})$ and thus we have for its Fourier transform, and any $k>0$, an estimate of
the form
\begin{equation}
|\kappa_1(u,v)|\leq C_{k}(u^{2}+v^{2})^{-k} \label{estifourierh}%
\end{equation}

We thus get, for $k>2$,
\[
|\sum_{(n,m)\neq(0,0)}(-1)^{n+m}\hat{g}_{1}(n,m)|\leq\sum_{(n,m)\neq
(0,0)}|\hat{g}_{1}(n,m)|
\]%
\[
=  \Lambda^4\beta a^3  \sum
_{(n,m)\neq(0,0)}|\kappa_1(n\Lambda a,m\Lambda\beta)|\leq C_{k}\Lambda^4\beta a^3
\sum_{(n,m)\neq(0,0)}((n\Lambda
a)^{2}+(m\Lambda\beta)^{2})^{-k}%
\]%
\[
\leq C_{k}\Lambda^4\beta a^3
(\Lambda\mu)^{-2k}\sum_{(n,m)\neq(0,0)}(n^{2}+m^{2})^{-k}\,,\ \ \mu=\inf(
a,\beta)
\]

We thus get, using a similar estimate for $\hat g_2$,
\begin{theorem}
\label{theorem3}Consider the product geometry $S_a^{3}\times S_{\beta}^{1}$.
  Then one has, for  any test
function $h$ in the Schwartz space ${\mathcal{S}
}({\mathbb{R}})$, the
equality
\begin{equation}
\mathrm{Tr}(h(D^{2}/\Lambda^{2}))=2\pi\Lambda^4\beta a^3\int_{0}^{\infty}h(\rho^{2})\rho^{3}d\rho-\pi\Lambda^2\beta a \int_{0}^{\infty}h(\rho
^{2})\rho d\rho+\epsilon(\Lambda)\label{estispecact}%
\end{equation}
where $\epsilon(\Lambda)=O(\Lambda^{-k})$ for any $k$ is majorized by
$$
|\epsilon(\Lambda)|\leq 2\Lambda^4\beta a^3  \sum
_{(n,m)\neq(0,0)}|\kappa_1(n\Lambda a,m\Lambda\beta)|+ \frac 12\Lambda^2\beta a
 \sum
_{(n,m)\neq(0,0)}|\kappa_2(n\Lambda a,m\Lambda\beta)|\,.\label{estispecactb1}%
$$
with $\kappa_j$ defined in \eqref{kappadef} and \eqref{kappa1}.
\end{theorem}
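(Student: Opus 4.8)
The plan is to start directly from \eqref{tracedsq}, which already writes the trace as the lattice sum $\sum_{\mathbb{Z}^2} g(n+\tfrac12,m+\tfrac12)$ with $g$ given by \eqref{gfunctdef}, and to apply the two-dimensional Poisson summation formula \eqref{Poisson2}. Since $h$ lies in the Schwartz space, the function $h(u^2(\Lambda a)^{-2}+v^2(\Lambda\beta)^{-2})$ has rapid decay in $(u,v)$, and multiplying by the polynomial factor $2(u^2-\tfrac14)$ keeps $g$ in $\mathcal{S}(\mathbb{R}^2)$; this guarantees that both $g$ and $\hat g$ decay fast enough for \eqref{Poisson2} to hold with no convergence issues. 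Applying it yields $\mathrm{Tr}(h(D^2/\Lambda^2))=\sum_{\mathbb{Z}^2}(-1)^{n+m}\hat g(n,m)$, and the first step is to peel off the $(n,m)=(0,0)$ term.

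The $(0,0)$ term is $\hat g(0,0)$, already evaluated in \eqref{ghatzero}, and it reproduces exactly the two explicit terms $2\pi\Lambda^4\beta a^3\int_0^\infty h(\rho^2)\rho^3\,d\rho-\pi\Lambda^2\beta a\int_0^\infty h(\rho^2)\rho\,d\rho$ of \eqref{estispecact}. Everything else is collected into $\epsilon(\Lambda)=\sum_{(n,m)\neq(0,0)}(-1)^{n+m}\hat g(n,m)$. To control this I would use the decomposition $g=2g_1-\tfrac12 g_2$ forced by $2(u^2-\tfrac14)=2u^2-\tfrac12$, so that $\hat g=2\hat g_1-\tfrac12\hat g_2$. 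Substituting the Fourier transforms computed before the statement, $\hat g_1(n,m)=\Lambda^4\beta a^3\,\kappa_1(n\Lambda a,m\Lambda\beta)$ and $\hat g_2(n,m)=\Lambda^2\beta a\,\kappa_2(n\Lambda a,m\Lambda\beta)$, and applying the triangle inequality term by term, gives precisely the majorization
\[
|\epsilon(\Lambda)|\leq 2\Lambda^4\beta a^3\sum_{(n,m)\neq(0,0)}|\kappa_1(n\Lambda a,m\Lambda\beta)|+\tfrac12\Lambda^2\beta a\sum_{(n,m)\neq(0,0)}|\kappa_2(n\Lambda a,m\Lambda\beta)|
\]
claimed in the statement.

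It then remains to verify that this bound is $O(\Lambda^{-k})$ for every $k$. Here I would invoke the Schwartz decay estimate \eqref{estifourierh}, $|\kappa_1(u,v)|\le C_k(u^2+v^2)^{-k}$, together with the analogous estimate for $\kappa_2$ (both valid because $x^2h(x^2+y^2)$ and $h(x^2+y^2)$ lie in $\mathcal{S}(\mathbb{R}^2)$). Factoring out $\mu=\inf(a,\beta)$ gives $((n\Lambda a)^2+(m\Lambda\beta)^2)^{-k}\le(\Lambda\mu)^{-2k}(n^2+m^2)^{-k}$, and for $k>2$ the lattice sum $\sum_{(n,m)\neq(0,0)}(n^2+m^2)^{-k}$ converges to a finite constant. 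The $\kappa_1$-contribution is thus bounded by a constant times $\Lambda^{4-2k}$ and the $\kappa_2$-contribution by a constant times $\Lambda^{2-2k}$, both of which can be made to decay faster than any prescribed power of $\Lambda$ by taking $k$ large. The proof is therefore almost entirely an assembly of ingredients prepared before the statement; the only point deserving a word of care is the legitimacy of Poisson summation, which is immediate once one observes that $g\in\mathcal{S}(\mathbb{R}^2)$, so I do not anticipate any genuine obstacle.
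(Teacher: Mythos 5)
Your proposal is correct and follows essentially the same route as the paper: the paper's own argument is exactly the application of the two-dimensional Poisson summation formula \eqref{Poisson2} to \eqref{tracedsq}, the identification of the $(0,0)$ term with \eqref{ghatzero}, the decomposition $g=2g_1-\tfrac12 g_2$ with the Fourier transforms expressed via $\kappa_1,\kappa_2$, and the Schwartz-decay bound \eqref{estifourierh} combined with $\mu=\inf(a,\beta)$ and the convergent lattice sum $\sum_{(n,m)\neq(0,0)}(n^2+m^2)^{-k}$. Your remark that $g\in\mathcal{S}(\mathbb{R}^2)$ justifies the Poisson summation is the one point the paper leaves implicit, and it is correctly handled.
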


This implies that all the Seeley coefficients $a_{2n}$ vanish for $n\geq 2$, and we shall
check this directly for $a_4$ and $a_6$ in \S \ref{seeley}.

This vanishing of the Seeley coefficients does not hold for the $4$ sphere and it is worth understanding why one cannot
expect to use the Poisson summation in the same way for the $4$ sphere.
The problem when one tries to use the Poisson formula as above is that,
{\textit{e.g.\/}\ } for the heat kernel, one is dealing with a function like
$|x|e^{-tx^{2}}$ which is not smooth and whose Fourier transform does not have
rapid decay at $\infty$.

\subsection{Specific test functions}
We shall now concretely evaluate the remainder in Theorem \ref{theorem3} for analytic test functions of the form
\begin{equation}\label{han}
    h(x)=P(\pi x)e^{-\pi x}
\end{equation}
where $P$ is a polynomial of degree $d$. The Fourier transforms of the functions of two variables
$x^2h(x^2+y^2)$ and $h(x^2+y^2)$ are of the form
$$\kappa_j(u,v)=P_j(u,v)e^{-\pi (u^2+v^2)}$$
where the $P_j$ are polynomials. More precisely, since the Fourier transform of $e^{-\lambda\pi (x^2+y^2)}$ is
$\frac 1\lambda e^{-\pi \frac{(u^2+v^2)}{\lambda}}$  one obtains the formula for $P_2$ by differentiation at
$\lambda=1$ and get
$$
\kappa_2(u,v)=P( -\partial_\lambda )_{\lambda=1}\frac 1\lambda e^{-\pi \frac{(u^2+v^2)}{\lambda}}
$$
which is of the form
$$
\kappa_2(u,v)=Q(\pi(u^2+v^2)) e^{-\pi  (u^2+v^2) }
$$
where $Q$ is a polynomial of degree $d$. The transformation $P\mapsto Q=T(P)$ is given by
\begin{equation}\label{transT}
    Q(z)=P( -\partial_\lambda )_{\lambda=1}\frac 1\lambda e^{-\frac z\lambda}
\end{equation}

Moreover one then gets
$$
\kappa_1(u,v)=-(2\pi)^{-2}\partial_u^2\kappa_2(u,v)$$ $$=(u^2Z_1(\pi(u^2+v^2)) +Z_2(\pi(u^2+v^2)))e^{-\pi  (u^2+v^2) }
$$
where
\begin{equation}\label{zz}
Z_1=-Q+ 2 Q'- Q''\,, \ \ Z_2=\frac{1}{2\pi}(Q-  Q')
\end{equation}

We let $C_P$ be the sum of the absolute values of the coefficients of $Q=T(P)$.
\medskip

\begin{corollary}
\label{cor1}Consider the product geometry $S_a^{3}\times S_{\beta}^{1}$.
Let $\mu=\inf (a,\beta)$. Then one has, with $h$ any test
function of the form \eqref{han}, the
equality
\begin{equation}
\mathrm{Tr}(h(D^{2}/\Lambda^{2}))=2\pi\Lambda^4\beta a^3\int_{0}^{\infty}h(\rho^{2})\rho^{3}d\rho-\pi\Lambda^2\beta a \int_{0}^{\infty}h(\rho
^{2})\rho d\rho+\epsilon(\Lambda)\label{estispecact1b}%
\end{equation}
where, assuming $\mu\Lambda\geq \sqrt{d(1+\log d)}$ and $\mu\Lambda\geq 1$,
\begin{equation}
|\epsilon(\Lambda)|\leq C e^{-\frac \pi 2 (\mu\Lambda)^2}\,,\ C=\Lambda^4\beta a^3C_P (8+6d+2d^2)
\label{estispecact2b}%
\end{equation}
\end{corollary}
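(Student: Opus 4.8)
The plan is to start from the explicit majorization of $\epsilon(\Lambda)$ furnished by Theorem \ref{theorem3}, so that everything reduces to estimating the two lattice sums $\sum_{(n,m)\neq(0,0)}|\kappa_j(n\Lambda a,m\Lambda\beta)|$, $j=1,2$. The whole point is that for the test functions \eqref{han} these sums are built out of a polynomial times a genuine Gaussian, and it is this Gaussian that will ultimately produce the advertised factor $e^{-\frac{\pi}{2}(\mu\Lambda)^2}$. Writing $w=u^2+v^2$, I would first record from \eqref{kappadef}, \eqref{kappa1} and \eqref{zz} that $\kappa_2(u,v)=Q(\pi w)e^{-\pi w}$ and $\kappa_1(u,v)=(u^2Z_1(\pi w)+Z_2(\pi w))e^{-\pi w}$, where $Q=T(P)$ has degree $d$ and $Z_1,Z_2$ are the explicit combinations of $Q,Q',Q''$ in \eqref{zz}.

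Next I would turn the polynomial factors into powers of $\pi w$. Since on every nonzero lattice point $\pi w\geq\pi(\mu\Lambda)^2\geq\pi>1$, for any polynomial $R=\sum_j r_jz^j$ one has $|R(\pi w)|\leq(\sum_j|r_j|)(\pi w)^{\deg R}$. Bounding the coefficient sums of $Q'$ and $Q''$ by $d$ and $d(d-1)$ times that of $Q$, the coefficient sums of $Z_1$ and $Z_2$ are controlled by $(d^2+d+1)C_P$ and $\tfrac{1+d}{2\pi}C_P$ respectively, where $C_P$ is the sum of absolute values of the coefficients of $Q$. Using in addition $u^2\leq w$ to absorb the $u^2$ in front of $Z_1$, I obtain for each lattice point a bound of the shape $C_P$ times a fixed polynomial in $\pi w$ of degree at most $d+1$ times $e^{-\pi w}$.

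The heart of the argument is the Gaussian domination. On a nonzero lattice point $w=(n\Lambda a)^2+(m\Lambda\beta)^2\geq(\mu\Lambda)^2(n^2+m^2)$, so I would split $e^{-\pi w}=e^{-\frac{\pi}{2}w}\,e^{-\frac{\pi}{2}w}$ and spend the first half on the polynomial. The hypothesis $\mu\Lambda\geq\sqrt{d(1+\log d)}$ is calibrated so that $(\pi w)^{d}e^{-\frac{\pi}{2}w}\leq 1$ for all $w\geq(\mu\Lambda)^2$: writing $x=\frac{\pi}{2}w$ this is $d\log(2x)\leq x$, and since $x\mapsto x-d\log(2x)$ increases for $x\geq d$ it is enough to verify it at the threshold $w=(\mu\Lambda)^2=d(1+\log d)$, which is the elementary inequality $\log\pi+\log d+\log(1+\log d)\leq\frac{\pi}{2}(1+\log d)$. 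The second half-Gaussian then supplies the prize: after using the monotone decay of $(\pi w)^qe^{-\pi w}$ beyond its maximum to replace $w$ by $(\mu\Lambda)^2(n^2+m^2)$, the lattice sum collapses to $\sum_{(n,m)\neq(0,0)}e^{-\frac{\pi}{2}(\mu\Lambda)^2(n^2+m^2)}$, which since $\mu\Lambda\geq1$ is dominated by its four nearest points, factors out a single $e^{-\frac{\pi}{2}(\mu\Lambda)^2}$, and leaves a convergent absolute constant carrying no $d$-dependence.

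The step I expect to be delicate is the treatment of the $u^2$ standing in front of $Z_1$ in $\kappa_1$: bounding $u^2\leq w$ raises the effective degree from $d$ to $d+1$, so one must dominate $(\pi w)^{d+1}e^{-\frac{\pi}{2}w}$, not merely $(\pi w)^{d}e^{-\frac{\pi}{2}w}$, by a constant that is still polynomial in $d$. This is precisely where the logarithmic correction in the threshold $\sqrt{d(1+\log d)}$, rather than a naive $\sqrt{d}$, is indispensable: it pushes $(\mu\Lambda)^2$ beyond the maximum $w=2(d+1)/\pi$ of $(\pi w)^{d+1}e^{-\frac{\pi}{2}w}$ for $d\geq2$ (the case $d=1$ being checked directly, using $\mu\Lambda\geq1$), so that this factor too is uniformly bounded. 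Once these pointwise estimates and the lattice summation are in hand, I would substitute back into the Theorem \ref{theorem3} majorization; tracking the weights $2\Lambda^4\beta a^3$ and $\frac{1}{2}\Lambda^2\beta a\leq\frac{1}{2}\Lambda^4\beta a^3$ (valid because $\Lambda a\geq\Lambda\mu\geq1$) together with the coefficient-sum bounds above is the bookkeeping that assembles the numerical factor $8+6d+2d^2$ in \eqref{estispecact2b}.
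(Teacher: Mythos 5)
Your proposal is correct and takes essentially the same route as the paper's proof: you reduce to the lattice sums majorizing $\epsilon(\Lambda)$ in Theorem \ref{theorem3}, spend half the Gaussian $e^{-\pi w}$ to absorb the polynomial factor (your monotonicity check that $(\pi w)^d e^{-\pi w/2}\leq 1$ at the threshold $w=(\mu\Lambda)^2=d(1+\log d)$ is exactly the paper's device $x^k e^{-x/2}\leq 1$ for $x\geq 3k(1+\log k)$, calibrated via $\pi(\mu\Lambda)^2\geq \pi d(1+\log d)\geq 3d(1+\log d)$), sum the remaining half-Gaussian over $(n,m)\neq(0,0)$ using $e^{-\frac{\pi}{2}(\mu\Lambda)^2}\leq\frac{1}{4}$ to extract the single factor $e^{-\frac{\pi}{2}(\mu\Lambda)^2}$ with an absolute constant, and assemble $C=C_P\bigl(2\Lambda^4\beta a^3(2+3d+d^2)+4\Lambda^2\beta a\bigr)$ before simplifying with $\mu\Lambda\geq 1$, just as the paper does. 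If anything, your explicit discussion of the $u^2 Z_1$ term raising the effective degree to $d+1$ is more detailed than the paper, which handles $\kappa_1$ only with the phrase ``a similar estimate using \eqref{zz}.''
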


\proof
One has
$$
x^ke^{-x/2}\leq 1 \,, \ \forall x\geq 3 k(1+\log k)
$$
Thus, for $(n,m)\neq (0,0)$ one has
$$
|\kappa_2(n\Lambda a,m\Lambda \beta)|\leq C_P e^{-\frac \pi 2((n\Lambda a)^2+(m\Lambda \beta)^2)}
$$
since $\pi((n\Lambda a)^2+(m\Lambda \beta)^2)\geq 3d(1+\log d)$. Moreover,
since $e^{-\frac \pi 2 (\mu\Lambda)^2}\leq \frac 14$, one gets
$$
\sum_{(n,m)\neq (0,0)}e^{-\frac \pi 2((n\Lambda a)^2+(m\Lambda \beta)^2)}\leq 8 e^{-\frac \pi 2 (\mu\Lambda)^2}
$$
and
$$
\sum
_{(n,m)\neq(0,0)}|\kappa_2(n\Lambda a,m\Lambda\beta)|\leq 8\,C_P e^{-\frac \pi 2 (\mu\Lambda)^2}
$$
A similar estimate using \eqref{zz} yields
$$
\sum
_{(n,m)\neq(0,0)}|\kappa_1(n\Lambda a,m\Lambda\beta)|\leq (2+3d+d^2)\,C_P e^{-\frac \pi 2 (\mu\Lambda)^2}
$$
Thus by Theorem \ref{theorem3}, the inequality \eqref{estispecact2b} holds for $$C=C_P(2\Lambda^4\beta a^3(2+3d+d^2)+4\Lambda^2\beta a)\,.$$ One then uses the hypothesis $\mu\Lambda\geq 1$ to simplify $C$.
\endproof

The meaning of Corollary \ref{cor1} is that the accuracy of the asymptotic expansion is at least of the
order of $e^{-\frac \pi 2 (\mu\Lambda)^2}$. Indeed the term $\Lambda^4\beta a^3$ in \eqref{estispecact2b} is the dominant volume term in the spectral action and the other terms in the formula for $C$ are of order one. Thus for instance for a size $\mu\Lambda\sim 100$ one gets that the asymptotic expansion accurately delivers the first $6820$ decimal places of the spectral action.
Note that some test functions of the form \eqref{han} give excellent approximations to cutoff functions, in particular
\begin{equation}\label{hnfunct}
h_n(x)=\sum_0^n\frac{(\pi x)^k}{k!}\,e^{-\pi x}
\end{equation}
The graph of $h_n(x^2)$ is shown in Figure \ref{figure1} for $n=20$. For $h=h_{20}$ the computation gives $C_P(8+6d+2d^2)\leq 2\times 10^6$ so that this constant only interferes
with the last six decimal places in the above accuracy.

\medskip

In our simplified physical model we test the approximation of the spectral action by its asymptotic expansion for the Euclidean model $$E(t)=S^3_{a(t)}\times S^1_{\beta(t)}$$ where space at a given time $t$ is given  by a sphere with
radius $a(t)$ and $\beta(t)$ is a uniform value of inverse temperature.
One can then easily see that the above  approximation to the spectral action is
fantastically accurate, going backwards in time  all the way   up to one order lower
than the Planck energy. In doing so the radius $a(t)$ varies between at least $\sim 10^{61\text{ }}$ Planck units and  10 Planck units (i.e.
$10^{-34}$ m), while the temperature varies
between   $2.7^{\circ}K$ and $\left(  10^{31}\right)  ^{\circ}K$. It is for an inner size less than
 $10$ in Planck units  that the approximation does break down.

\begin{remark}\label{esticp} {\rm For later purpose it is important to estimate the constant $C_P$ in terms of the coefficients of the polynomial $P$. Let then $P(z)=z^n$. One has $h(x)=(\pi x)^ne^{-\pi x}$ and the function $k(x)$ associated to $h$ by \eqref{transg} is
$$
k(x)=\int_{\mathbb R} h(x+y^2)dy=\pi^ne^{-\pi x}\sum_0^n {n\choose k}x^{n-k}\int_{\mathbb R}y^{2k}e^{-\pi y^2}dy
$$
$$
=\pi^{-1/2}e^{-\pi x}\sum_0^n {n\choose k}\Gamma(\frac 12 +k)(\pi x)^{n-k}
$$
To obtain $Q=T(P)$ one then needs to compute the Fourier transform $\kappa(u^2)$ of the function $k(x^2)$ as in \eqref{expect}. The Fourier transform  of $(\pi x^2)^m e^{-\pi x^2}$ is
$$\ell_m(u)=(-4\pi)^{-m}\partial_u^{2m}e^{-\pi u^2}=L_m(\pi u^2)e^{-\pi u^2}$$
and one checks, using the relation
$$
L_{m+1}(z)=1/2 ((1 - 2 z) L_m(z) + (-1 + 4 z)
L'_m(z) - 2 z L''_m(z)
$$
that the sign of the coefficient of $z^k$ in $L_m(z)$ is $(-1)^k$.
Thus the sum of the absolute values of the coefficients of $L_m$ is equal to $L_m(-1)=\ell_m(i\pi^{-1/2})/e$. Thus since the above sum giving $k(x)$ has positive coefficients we get that, for $P(z)=z^n$, the constant $C_P$ is given by
$Q(\pi(u^2+v^2)e^{-\pi(u^2+v^2)})/e$ for $(u,v)=(i\pi^{-1/2},0)$, which gives
$$
C_P=\int_{{\mathbb R}^2}\pi^n(y^2+x^2)^ne^{-\pi y^2-\pi x^2+2\sqrt \pi x-1}dxdy\,.
$$
One then gets
\begin{equation}\label{esticoefp}
    C_P\leq 2\int_0^\infty u^{2n+1}e^{-(u-1)^2}du=O(\lambda^n n!) \,, \ \forall \lambda>1\,.
\end{equation}
Thus, for an arbitrary polynomial $P(z)=\sum_0^d a_k z^k$ one has
\begin{equation}\label{esticoefp1}
C_P\leq 2\int_0^\infty |P|(u^2) e^{-(u-1)^2}udu\,, \ \ |P|(z)=\sum |a_k|z^k
\end{equation}
}
\end{remark}

\medskip

\subsection{The Higgs potential}

We now look at what happens if one performs the following replacement on the
operator
\[
D^{2}\mapsto D^{2}+H^{2}%
\]
where $H$ is a constant. This amounts with the above notations to the
replacement
\begin{equation}\label{replace}
h(u)\mapsto h(u+H^{2}/\Lambda^{2})\,.
\end{equation}
As long as $H^{2}/\Lambda^{2}$ is of order one, we can trust the asymptotic expansion
and we just need to understand the effect of this shift on the two terms of \eqref{estispecact}.
We look at  the first contribution,
{\textit{i.e.\/}\ }
\[
2\pi\Lambda^4\beta a^3\int
_{0}^{\infty}h(\rho^{2})\rho^{3}d\rho=\pi\Lambda^4\beta a^3\int_{0}^{\infty}uh(u)du
\]
We let $x=H^{2}/\Lambda^{2}$, and get, after the above replacement,
\[
\int_{0}^{\infty}uh(u+x)du=\int_{x}^{\infty}(v-x)h(v)dv=\int_{0}^{\infty
}(v-x)h(v)dv-\int_{0}^{x}(v-x)h(v)dv
\]%
\[
=\int_{0}^{\infty}vh(v)dv-x\int_{0}^{\infty}h(v)dv-\int_{0}^{x}(v-x)h(v)dv
\]
The first term corresponds to the initial contribution of $\pi\Lambda^4\beta a^3\int_{0}^{\infty}uh(u)du$.
The second term gives
\begin{equation}\label{secterm}
-\pi\Lambda^4\beta a^3x\int
_{0}^{\infty}h(v)dv=-\pi\Lambda^2\beta a^3 H^{2}\int_{0}^{\infty}h(v)dv
\end{equation}
which is the expected Higgs mass term from the Seeley--de Witt coefficient $a_{2}$. To understand the last term we assume that $h$ is a cutoff function.

\begin{lemma}
\label{pert}If $\ h$ is a smooth function  constant on the interval
$[0,c]$,  then for  $x=H^{2}/\Lambda^{2}\leq c$ the new terms arising from the replacement
\eqref{replace} are given by
\begin{equation}\label{newterms}
-\pi\Lambda^2\beta a^3  \int_{0}^{\infty
}h(v)dv\;H^{2}+\frac{1}{2}\pi\beta ah(0)\;H^{2}+\frac{1}{2}\pi\beta a^3  h(0)\;H^{4}
\end{equation}
\end{lemma}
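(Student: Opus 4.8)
The plan is to track, one by one, how the substitution $h(u)\mapsto h(u+x)$ with $x=H^{2}/\Lambda^{2}$ acts on the two terms of \eqref{estispecact}, after rewriting them in the variable $u=\rho^{2}$ as $\pi\Lambda^{4}\beta a^{3}\int_{0}^{\infty}u\,h(u)\,du$ and $-\frac12\pi\Lambda^{2}\beta a\int_{0}^{\infty}h(u)\,du$. For the first term the decomposition already recorded in the text, obtained by shifting the integration variable to $v=u+x$ and splitting the domain, produces the original contribution, the Higgs mass term \eqref{secterm}, and a left-over boundary integral $-\pi\Lambda^{4}\beta a^{3}\int_{0}^{x}(v-x)h(v)\,dv$. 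So the only genuinely new computations are (i) to evaluate this boundary integral and (ii) to repeat the shift for the second term of \eqref{estispecact}, which has not yet been touched.

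The key device making both computations elementary is the hypothesis that $h$ is constant on $[0,c]$ together with $x=H^{2}/\Lambda^{2}\le c$: on the whole interval of integration $[0,x]$ one may replace $h(v)$ by the constant $h(0)$. First I would use this to compute $\int_{0}^{x}(v-x)\,dv=-\tfrac12 x^{2}$, so that the boundary integral contributes $\tfrac12\pi\Lambda^{4}\beta a^{3}h(0)\,x^{2}$; substituting $x=H^{2}/\Lambda^{2}$ turns the prefactor $\Lambda^{4}x^{2}$ into $H^{4}$ and yields the quartic term $\tfrac12\pi\beta a^{3}h(0)\,H^{4}$. Next I would apply the same shift to the second term, writing $-\frac12\pi\Lambda^{2}\beta a\int_{0}^{\infty}h(u+x)\,du=-\frac12\pi\Lambda^{2}\beta a\int_{x}^{\infty}h(v)\,dv$, whose difference from the original equals $\frac12\pi\Lambda^{2}\beta a\int_{0}^{x}h(v)\,dv=\frac12\pi\Lambda^{2}\beta a\,h(0)\,x=\frac12\pi\beta a\,h(0)\,H^{2}$, again using $\Lambda^{2}x=H^{2}$.

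Finally I would collect the three new contributions: the Higgs mass term $-\pi\Lambda^{2}\beta a^{3}H^{2}\int_{0}^{\infty}h(v)\,dv$ already identified in \eqref{secterm}, the quadratic term $\tfrac12\pi\beta a\,h(0)\,H^{2}$ coming from the second term of \eqref{estispecact}, and the quartic term $\tfrac12\pi\beta a^{3}h(0)\,H^{4}$ coming from the boundary integral. Their sum is exactly \eqref{newterms}. There is no serious obstacle here; the whole point is that the flatness of $h$ near the origin collapses the otherwise $h$-dependent boundary integrals to monomials in $x$, and the only thing to watch is the bookkeeping of the powers of $\Lambda$, so that $\Lambda^{4}x^{2}$ and $\Lambda^{2}x$ assemble cleanly into $H^{4}$ and $H^{2}$. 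One should also note that the error term $\epsilon(\Lambda)$ of \eqref{estispecact} is left untouched, since the replacement \eqref{replace} is performed on the exact two-term expression and is legitimate precisely as long as $H^{2}/\Lambda^{2}$ is of order one, so that the asymptotic expansion can still be trusted.
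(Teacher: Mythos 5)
Your proposal is correct and follows essentially the same route as the paper's own proof: the same rewriting in the variable $u=\rho^{2}$, the same shift-and-split decomposition of each of the two terms of \eqref{estispecact}, and the same use of the constancy of $h$ on $[0,x]$ to evaluate $-\int_{0}^{x}(v-x)h(v)\,dv=\tfrac{1}{2}h(0)x^{2}$ and $\int_{0}^{x}h(v)\,dv=h(0)x$, with the powers of $\Lambda$ assembling into $H^{4}$ and $H^{2}$ exactly as in the text. No gaps; nothing further is needed.
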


\begin{proof}
For the perturbation of $\pi\Lambda^4\beta a^3\int_{0}^{\infty}uh(u)du$, besides
 \eqref{secterm}, we just need to compute the last term $-\int_{0}^{x}(v-x)h(v)dv$, and one has
\[
-\int_{0}^{x}(v-x)h(v)dv=h(0)\int_{0}^{x}(x-v)dv=\frac{1}{2}h(0)x^{2}%
\]
since $h$ is constant on the interval $[0,x]$.

 We then look at the effect on the second contribution,
{\textit{i.e.\/}\ }
\[
-\pi\Lambda^2\beta a  \int_{0}^{\infty
}h(\rho^{2})\rho d\rho=-\frac{1}{2}\pi\Lambda^2\beta a \int_{0}^{\infty}h(u)du
\]
We let, as above, $x=H^{2}/\Lambda^{2}$, and get
\[
\int_{0}^{\infty}h(u+x)du=\int_{x}^{\infty}h(v)dv=\int_{0}^{\infty}%
h(v)dv-\int_{0}^{x}h(v)dv
\]
Thus the perturbation, under the hypothesis of Lemma \ref{pert} is
\[
-\frac{1}{2}\pi\Lambda^2\beta a
(-xh(0))=\frac{1}{2}\pi\beta ah(0)\;H^{2}%
\]
\end{proof}

The three terms in  formula \eqref{newterms} correspond to the following new terms for the spectral action \begin{itemize}
  \item The Higgs mass term coming from the Seeley--de Witt coefficient $a_{2}$.
  \item The $R H^2$ term coming from the Seeley--de Witt coefficient $a_{4}$.
  \item The Higgs potential term in $H^4$ coming from the Seeley--de Witt coefficient $a_{4}$.
\end{itemize}

We can now state the analogue of Theorem \ref{theorem3} as follows

\begin{theorem}
\label{poissonthm1} Consider the product geometry $S_a^{3}\times S_{\beta}^{1}$.
Let $\mu=\inf (a,\beta)$. Then one has, with $h$ any test
function of the form \eqref{han},  the equality
$$
{\mathrm{Tr}}(h((D^{2}+H^{2})/\Lambda^{2}))=2\pi\Lambda^4\beta a^3\int_{0}^{\infty}h(\rho^{2})\rho^{3}d\rho-\pi\Lambda^2\beta a \int_{0}^{\infty}h(\rho
^{2})\rho d\rho
$$
\[
+\pi\Lambda^4\beta a^3 \,V(H^2/\Lambda^2)+\frac 12 \pi\Lambda^2\beta a \, W(H^2/\Lambda^2)+\epsilon(\Lambda)
\]
where
\begin{equation}\label{pot}
V(x)=\int_{0}^{\infty}u(h(u+x)-h(u))du\,, \ W(x)=\int_{0}^{x} h(u)du\,
\end{equation}
and, assuming $\mu\Lambda\geq \sqrt{d(1+\log d)}$, $\mu\Lambda\geq 1$, and
 $H^{2}\Lambda^{-2}\leq c/\pi$,
\begin{equation}
|\epsilon(\Lambda)|\leq C e^{-\frac \pi 2 (\mu\Lambda)^2}\,,\ C=\Lambda^4\beta a^3C'_P (8+6d+2d^2)
\end{equation}
where, with $P(z)=\sum_0^d a_k z^k$ one has  $$C'_P=4\int_0^\infty |P|(u^2+c) e^{-(u-1)^2}udu\,, \ \ |P|(z)=\sum |a_k|z^k$$
\end{theorem}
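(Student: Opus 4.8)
The plan is to deduce everything from Corollary \ref{cor1} applied to a \emph{shifted} test function. By \eqref{replace}, the substitution $D^{2}\mapsto D^{2}+H^{2}$ is implemented at the level of functions by $h\mapsto\tilde h$ with $\tilde h(u)=h(u+x)$, $x=H^{2}/\Lambda^{2}$, so that
$$\mathrm{Tr}(h((D^{2}+H^{2})/\Lambda^{2}))=\mathrm{Tr}(\tilde h(D^{2}/\Lambda^{2})).$$
First I would check that $\tilde h$ is again of the form \eqref{han}: writing $h(u)=P(\pi u)e^{-\pi u}$ one has $\tilde h(u)=e^{-\pi x}P(\pi u+\pi x)e^{-\pi u}=\tilde P(\pi u)e^{-\pi u}$ with $\tilde P(z)=e^{-\pi x}P(z+\pi x)$, a polynomial of the \emph{same} degree $d$. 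Hence Corollary \ref{cor1} applies verbatim to $\tilde h$ (the hypotheses $\mu\Lambda\geq\sqrt{d(1+\log d)}$ and $\mu\Lambda\geq 1$ are untouched by the shift), giving the two leading terms for $\tilde h$ plus an error bounded by $\Lambda^{4}\beta a^{3}C_{\tilde P}(8+6d+2d^{2})e^{-\frac\pi2(\mu\Lambda)^{2}}$, where $C_{\tilde P}$ is the coefficient sum of $T(\tilde P)$.

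Next I would re-express the two leading terms for $\tilde h$ so as to peel off the original terms and the Higgs corrections, exactly as in the computation preceding Lemma \ref{pert}. Using $u=\rho^{2}$, one has $2\pi\Lambda^{4}\beta a^{3}\int_{0}^{\infty}\tilde h(\rho^{2})\rho^{3}d\rho=\pi\Lambda^{4}\beta a^{3}\int_{0}^{\infty}u\,h(u+x)\,du$, and the splitting $\int_{0}^{\infty}u\,h(u+x)\,du=\int_{0}^{\infty}u\,h(u)\,du+V(x)$ recovers the original volume term plus $\pi\Lambda^{4}\beta a^{3}V(H^{2}/\Lambda^{2})$, with $V$ as in \eqref{pot}. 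Likewise $-\pi\Lambda^{2}\beta a\int_{0}^{\infty}\tilde h(\rho^{2})\rho\,d\rho=-\tfrac12\pi\Lambda^{2}\beta a\int_{x}^{\infty}h(v)\,dv$, and writing $\int_{x}^{\infty}h=\int_{0}^{\infty}h-W(x)$ recovers the original curvature term plus $\tfrac12\pi\Lambda^{2}\beta a\,W(H^{2}/\Lambda^{2})$. This reproduces the four main terms of the statement, and I would emphasise that the integral formulas for $V$ and $W$ require no hypothesis on $h$ beyond \eqref{han}, so this is genuinely more general than Lemma \ref{pert} (which specialised to $h$ constant near $0$).

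The hard part will be controlling the error constant \emph{uniformly in $H$}, i.e.\ bounding $C_{\tilde P}$ independently of the shift. Here I would invoke Remark \ref{esticp}: by \eqref{esticoefp1} applied to $\tilde P$, $C_{\tilde P}\leq 2\int_{0}^{\infty}|\tilde P|(u^{2})e^{-(u-1)^{2}}u\,du$, where $|\tilde P|$ carries the absolute values of the coefficients of $\tilde P$. The key elementary estimate is that, since the coefficients of $\tilde P$ arise from $P(z+\pi x)$ by binomial expansion times a factor $e^{-\pi x}\leq 1$, one has $|\tilde P|(w)\leq|P|(w+\pi x)$ for all $w\geq 0$; concretely $\sum_{j}\big|\sum_{k\geq j}a_{k}\binom{k}{j}(\pi x)^{k-j}\big|w^{j}\leq\sum_{k}|a_{k}|(w+\pi x)^{k}=|P|(w+\pi x)$. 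The hypothesis $H^{2}\Lambda^{-2}\leq c/\pi$, i.e.\ $\pi x\leq c$, then gives $C_{\tilde P}\leq 2\int_{0}^{\infty}|P|(u^{2}+c)e^{-(u-1)^{2}}u\,du=\tfrac12 C'_{P}$, and feeding this into the bound from Corollary \ref{cor1} yields $|\epsilon(\Lambda)|\leq\Lambda^{4}\beta a^{3}C'_{P}(8+6d+2d^{2})e^{-\frac\pi2(\mu\Lambda)^{2}}$, as asserted (the factor $4$ in $C'_{P}$ absorbing the $\tfrac12$ with room to spare). I expect this uniform-in-$H$ bookkeeping of the coefficient sum under shifting to be the only genuinely new ingredient, everything else being a direct reuse of Corollary \ref{cor1}.
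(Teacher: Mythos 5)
Your proposal is correct and follows essentially the same route as the paper: translate the test function via $\tilde h(u)=h(u+x)$, observe that $\tilde h$ is still of the form \eqref{han} with a degree-$d$ polynomial so that the estimate of Corollary \ref{cor1} (equivalently Theorem \ref{theorem3}) applies, peel off $V$ and $W$ from the two leading terms, and control $C_{\tilde P}$ via Remark \ref{esticp}, \eqref{esticoefp1}, using the binomial expansion of $P(z+\pi x)$ and monotonicity in $x$ up to $\pi x\leq c$. Your explicit inequality $|\tilde P|(w)\leq |P|(w+\pi x)$ for general $P$ just spells out what the paper does for $P(z)=z^n$ and extends by the triangle inequality, so there is no substantive difference.
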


\proof The new terms simply express the replacement \eqref{replace} in the formula of Theorem \ref{theorem3}. The new function $\tilde h$ thus obtained is still of the form \eqref{han} since it is obtained from $h$ by a translation. It thus only remains to estimate $C_{\tilde P}$
where $\tilde P$ is the polynomial such that $\tilde h(u)=\tilde P(\pi u)e^{-\pi u}$. For $P(z)=z^n$ the constant $C_{\tilde P}$ for a translation $u\mapsto u+x$, $x\geq 0$ of the variable, is less than the constant $C_{P_x}$ for the polynomial $$P_x(\pi u)=P(\pi(u+x))=\sum {n\choose k}(\pi x)^{n-k}(\pi u)^k$$ Thus, by Remark \ref{esticp}, \eqref{esticoefp1}, the constant $C_{P_x}$ is estimated by
$$
C_{P_x}\leq 2\int_0^\infty (u^2+\pi x)^ne^{-(u-1)^2}udu
$$
which is an increasing function of $x$ and thus only needs to be controlled for $x=c/\pi$ in our case.
\endproof

For instance, for $h=h_{20}$ the computation gives $C'_P(8+6d+2d^2)\leq 3\times 10^7$ for $c=1$, so that this constant only interferes
with the last seven decimal places in the accuracy which is the same as in Corollary \ref{cor1}.

Moreover as shown in Lemma \ref{pert}, when $h$ is close to a true cutoff function
\begin{equation}\label{approchiggs}
\pi\Lambda^4\beta a^3 \,V(H^2/\Lambda^2)+\frac 12 \pi\Lambda^2\beta a \, W(H^2/\Lambda^2)
\end{equation}
\[
=-2\pi\Lambda^2\beta a^{3}\int_{0}^{\infty}h(\rho
^{2})\rho d\rho\, H^{2}+ \frac{1}{2}\pi\beta ah(0)\;H^{2}
+\frac{1}{2}\pi\beta a^{3}h(0)\;H^{4}+\delta
\]
where the remainder $\delta$ is estimated from the Taylor expansion of $h$ at $0$. For instance for the functions $h_n$ of \eqref{hnfunct}, one has by construction $0\leq h_n(x)\leq 1$ for all $x$ and
since $$h_n(x)=1 - \sum a(n,k) x^{n+k+1}\,, \ \
a(n,k)=(-1)^k /((n + k + 1) n! k!)$$
one gets, for $h=h_n$ the estimate
$$
|\delta|\leq \pi\Lambda^4\beta a^3\frac{x^{n+3}}{(n+3)(n+1)!}+\pi\Lambda^2\beta a
\frac{x^{n+2}}{2(n+2)!}\,, \ \ x=H^2/\Lambda^2\,.
$$

While the function $W$ is by construction the primitive of $h$, and is increasing
for $h\geq 0$ one has, under the hypothesis of positivity of $h$,

\begin{lemma}
\label{pert1} The function $V(x)$ is decreasing with derivative given by
\[
V^{\prime}(x)=-\int_{x}^{\infty}h(v)dv
\]
The second derivative of $V(x)$ is equal to $h(x)$.
\end{lemma}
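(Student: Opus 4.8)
The plan is to reduce everything to the fundamental theorem of calculus after a single change of variables. Starting from the definition \eqref{pot}, I would first isolate the $x$-dependence: the term $\int_0^\infty u\,h(u)\,du$ is a constant, while in $\int_0^\infty u\,h(u+x)\,du$ the substitution $v=u+x$ gives $\int_x^\infty (v-x)\,h(v)\,dv$. Hence $V(x)=\int_x^\infty (v-x)\,h(v)\,dv-\int_0^\infty u\,h(u)\,du$, and all the dependence on $x$ now sits in a single integral whose integrand carries the factor $(v-x)$ vanishing at the lower endpoint.

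Next I would differentiate under the integral sign by the Leibniz rule. The moving lower limit contributes $-(v-x)h(v)\big|_{v=x}$, which is zero precisely because of the factor $(v-x)$, so no boundary term survives and only the derivative of the integrand in the parameter remains. Since $\partial_x\big[(v-x)h(v)\big]=-h(v)$, this yields at once $V'(x)=-\int_x^\infty h(v)\,dv$. Under the standing positivity hypothesis $h\ge 0$ this quantity is $\le 0$, so $V$ is decreasing, as claimed.

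For the second derivative I would simply apply the fundamental theorem of calculus to the expression $V'(x)=-\int_x^\infty h(v)\,dv$: differentiating with respect to the lower limit gives $V''(x)=h(x)$.

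The only point needing care — and the sole potential obstacle — is justifying the interchange of differentiation and integration and the vanishing of the contribution at $+\infty$. Both follow from the rapid decay of $h$ at $+\infty$, which holds for the test functions of the form \eqref{han} and, more generally, for $h$ in the Schwartz space; this makes the products $u\,h(u+x)$ and $(v-x)h(v)$ integrable on the relevant half-lines and kills the upper endpoint. I expect no difficulty beyond recording this regularity remark.
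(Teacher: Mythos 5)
Your proof is correct, and it is a mild but genuine variant of the paper's. The paper works in the original variables: it differentiates under the integral sign to get $V'(x)=\int_0^\infty u\,h'(u+x)\,du$ and then integrates by parts, $V'(x)=\bigl[u\,h(u+x)\bigr]_0^\infty-\int_0^\infty h(u+x)\,du$, where the boundary term vanishes at $u=0$ because of the explicit factor $u$ and at $u=\infty$ by the decay of $h$, yielding $V'(x)=-\int_x^\infty h(v)\,dv$ and then $V''=h$. Your substitution-first route ($v=u+x$, so that $V(x)=\int_x^\infty (v-x)h(v)\,dv$ plus a constant, followed by the Leibniz rule with moving lower limit) reaches the same formula without ever differentiating $h$: the $x$-dependence sits entirely in the limit and in the factor $(v-x)$, whose vanishing at $v=x$ kills the boundary contribution. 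That is a small technical economy — you avoid invoking integrability of $u\,h'(u+x)$ — and your vanishing boundary term at $v=x$ is exactly the paper's vanishing term at $u=0$ seen in the other coordinate system. Both arguments rest on the rapid decay of $h$ at infinity, which you correctly single out as the only regularity point (automatic for the test functions \eqref{han} or Schwartz $h$), and your monotonicity conclusion from $h\geq 0$ and the final step $V''(x)=h(x)$ via the fundamental theorem of calculus agree with the paper.
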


\begin{proof}
One has
\[
V^{\prime}(x)=\int_{0}^{\infty}uh^{\prime}(u+x)du=[uh(u+x)]_{0}^{\infty}%
-\int_{0}^{\infty}h(u+x)du
\]
which gives the required results.
\end{proof}

\section{Seeley--De Witt coefficients and Spectral Action on $S^{3}\times S^{1}$}\label{seeley}

In this section we shall compute the asymptotic expansion of the spectral action on the background
geometry of $S^{3}\times S^{1}$ using heat kernel methods. This will enable us
to check independently the accuracy of the estimates derived in the last
section. This background is physically relevant since it can be connected with
simple cosmological models. We refer to \cite{Gilkey}, \cite{cc2} for the
formulas and the method of the computation. The general method we use is also
explained in great detail in a forthcoming paper \cite{hidden}. We start by
computing $a_{0}:$
\begin{align*}
a_{0} &  =\frac{\text{Tr}(1)}{16\pi^{2}}%
{\displaystyle\int}
\sqrt{g}d^{4}x=\frac{1}{4\pi^{2}}%
{\displaystyle\int\limits_{S^{3}}}
\sqrt{^{3}g}d^{3}x%
{\displaystyle\int\limits_{S^{1}}}
dx\\
&  =\frac{1}{4\pi^{2}}\left(  2\pi^{2}a^{3}\right)  \left(  2\pi\beta\right)
=\pi\beta a^{3}%
\end{align*}
where $\beta$  is the radius of $S_\beta^{1}$  and the volume of the
three sphere $S^3_a$ of radius $a$ is $2\pi^{2}a^{3}$ \cite{weinberg} .

Next we calculate $a_{2}$%
\[
a_{2}=\frac{1}{16\pi^{2}}%
{\displaystyle\int}
d^{4}x\sqrt{g}\text{Tr}\left(  E+\frac{1}{6}R\right)
\]
where $E$ is defined from the relation
\[
D^{2}=-\left(  g^{\mu\nu}\nabla_{\mu}\nabla_{\nu}+E\right)
\]
where for pure gravity we have
\[
E=-\frac{1}{4}R
\]
so that (using Tr$\left(  1\right)  =4$ )
\[
a_{2}=\frac{1}{4\pi^{2}}\left(  -\frac{R}{12}\right)
{\displaystyle\int}
d^{4}x\sqrt{g}%
\]
since the curvature is constant. The curvature tensor\footnote{the sign convention for this tensor is the same as in \cite{Gilkey}} is, using the coordinates of \cite{weinberg} for the three sphere $S^3_a$ with labels $i,j,k,l$ and the label $4$ for the coordinate in $S_\beta^1$,
\begin{align*}
R_{ijkl} &  =-a^{-2}\left(  g_{ik}g_{jl}-g_{il}g_{jk}\right)  ,\quad
i,j,k,l=1,\cdots3\\
R_{ijk4} &  =0\\
R_{i4j4} &  =0
\end{align*}
where $g_{ij}$ is the metric on the three sphere as in \cite{weinberg}. The Ricci tensor is given, following the sign convention of \cite{Lawson}
which introduces a minus sign in passing from the curvature tensor to the Ricci tensor, by
\begin{align*}
R_{ij} &  =-g^{kl}R_{ikjl}=2a^{-2}g_{ij}\\
R_{i4} &  =0\\
R_{44} &  =0
\end{align*}
Thus the scalar curvature is
\[
R=g^{ij}R_{ij}=\frac{6}{a^{2}}%
\]
and the $a_{2}$ term in the heat expansion simplifies to
\[
a_{2}=-\pi\beta a\left(  \frac{1}{2}\right)
\]
Next for $a_{4}$ we have
\begin{align*}
a_{4} &  =\frac{1}{16\pi^{2}}\frac{1}{360}%
{\displaystyle\int\limits_{M}}
d^{4}x\sqrt{g}\,Tr\left(  12R_{;\mu}^{\;\;\mu}+5R^{2}-2R_{\mu\nu}R^{\mu\nu
}\right.  \\
&  \quad\left.  +\,2R_{\mu\nu\rho\sigma}R^{\mu\nu\rho\sigma}+60RE+180E^{2}%
+60E_{;\mu}^{\quad\mu}+30\,\Omega_{\mu\nu}\Omega^{\mu\nu}\right)
\end{align*}
where for the pure gravitational theory, we have
\[
E=-\frac{1}{4}R,\qquad\Omega_{\mu\nu}=\frac{1}{4}R_{\mu\nu}^{\quad ab}%
\gamma_{ab}%
\]
In this case it was shown in \cite{cc2} that $a_{4}$ reduces to
\begin{align}
a_{4} &  =\frac{1}{4\pi^{2}}\frac{1}{360}%
{\displaystyle\int}
d^{4}x\sqrt{g}\left(  5R^{2}-8R_{\mu\nu}^{2}-7R_{\mu\nu\rho\sigma}^{2}\right)
\nonumber\\
&  =\frac{1}{4\pi^{2}}\frac{1}{360}%
{\displaystyle\int}
d^{4}x\sqrt{g}\left(  -18C_{\mu\nu\rho\sigma}^{2}+11R^{\ast}R^{\ast}\right)
\end{align}
which is obviously scale invariant. The Weyl tensor $C_{\mu\nu\rho\sigma}$ is
defined by
\begin{align*}
C_{\mu\nu\rho\sigma} &  =R_{\mu\nu\rho\sigma}+\frac{1}{2}\left(  R_{\mu\rho
}g_{\nu\sigma}-R_{\nu\rho}g_{\mu\sigma}-R_{\mu\sigma}g_{\nu\rho}+R_{\nu\sigma
}g_{\mu\rho}\right)  \\
&  -\frac{1}{6}\left(  g_{\mu\rho}g_{\nu\sigma}-g_{\nu\rho}g_{\mu\sigma
}\right)  R
\end{align*}
This tensor vanishes on $S^{3}\times S^{1}$ as can be seen by evaluating the
components
\begin{align*}
C_{ijkl} &  =a^{-2}\left[  -(g_{ik}g_{jl}-g_{il}g_{jk})+2(g_{ik}g_{jl}%
-g_{il}g_{jk})-(g_{ik}g_{jl}-g_{il}g_{jk})\right]=0  \\
C_{ijk4} &  =0\\
C_{i4k4} &  =0
\end{align*}
Similarly the Gauss-Bonnet term
\begin{align*}
R^{\ast}R^{\ast} &  =\frac{1}{4}\epsilon^{\mu\nu\rho\sigma}\epsilon
_{\alpha\beta\gamma\delta}R_{\mu\nu}^{\quad\alpha\beta}R_{\rho\sigma}%
^{\quad\gamma\delta}\\
&  =\epsilon^{ijk4}\epsilon_{\alpha\beta\gamma\delta}\left(  R_{ij}%
^{\quad\alpha\beta}R_{k4}^{\quad\gamma\delta}\right)  \\
&  =0
\end{align*}
The next step of calculating $a_{6}$ is in general extremely complicated, but
for spaces of constant curvature the expression simplifies as all covariant
derivatives of the curvature tensor, Riemann tensor and scalar curvature
vanish. The non-vanishing terms are, using Theorem 4.8.16 of \cite{Gilkey} and the
above sign convention for the Ricci tensor $R_{\mu\nu}$ and the scalar curvature,
\begin{align*}
a_{6} &  =\frac{1}{16\pi^{2}}%
{\displaystyle\int}
d^{4}x\sqrt{g}\text{Tr}\left(  \frac{1}{9\cdot7!}\left(  35R^{3}-42RR_{\mu\nu
}^{2}+42RR_{\mu\nu\rho\sigma}^{2}\right.  \right.  \\
&  \qquad\qquad\qquad-208R_{\mu\nu}R_{\mu\rho}R_{\nu\rho}-192R_{\mu\rho}%
R_{\nu\sigma}R_{\mu\nu\rho\sigma}-48R_{\mu\nu}R_{\mu\rho\sigma\kappa}%
R_{\nu\rho\sigma\kappa}\\
&  \qquad\qquad\qquad\left.  -44R_{\mu\nu\rho\sigma}R_{\mu\nu\kappa\lambda
}R_{\rho\sigma\kappa\lambda}-80R_{\mu\nu\rho\sigma}R_{\mu\kappa\rho\lambda
}R_{\nu\kappa\sigma\lambda}\right)  \\
&  \qquad\qquad+\frac{1}{360}\left(  -12\Omega_{\mu\nu}\Omega_{\nu\rho}%
\Omega_{\rho\mu}-6R_{\mu\nu\rho\sigma}\Omega_{\mu\nu}\Omega_{\rho\sigma
}-4R_{\mu\nu}\Omega_{\mu\rho}\Omega_{\nu\rho}+5R\Omega_{\mu\nu}^{2}\right.  \\
&  \qquad\qquad\qquad\qquad\left.  \left.  +60E^{3}+30E\Omega_{\mu\nu}%
^{2}+30RE^{2}+5R^{2}E-2R_{\mu\nu}^{2}E+2R_{\mu\nu\rho\sigma}^{2}E\right)
\right)
\end{align*}
We can now compute each of the above eighteen terms. These are listed in an
appendix. Collecting these terms we obtain that the integrand is
\begin{align*}
&  -\frac{4a^{-6}}{9\cdot7!}\left(  -35\cdot6^{3}+42\cdot72-42\cdot
72+208\cdot24-192\cdot24+48\cdot24-44\cdot24-80\cdot6\right) \\
&  -\frac{4a^{-6}}{360}\left(  9+18-12+45+\frac{15\cdot27}{2}-\frac{5\cdot
27}{2}-15\cdot27+10\cdot27-36+36\right) \\
&  =a^{-6}\left(  \frac{2}{3}-\frac{2}{3}\right)  =0
\end{align*}
implying that
\[
a_{6}=0\,,
\]
which shows that the cancelation is highly non-trivial. We conclude that the
spectral action, up to terms of order $\frac{1}{\Lambda^{4}}$ is given by
\[
S=\Lambda^{4}%
{\displaystyle\int_{0}^{\infty}}
xh\left(  x\right)  dx\left(  \pi\beta a^{3}\right)  -\Lambda^{2}%
{\displaystyle\int_{0}^{\infty}}
h\left(  x\right)\,dx  \left(  \pi\beta a\frac{1}{2}\right)  +O\left(
\Lambda^{-4}\right)
\]
After making the change of variables $x=\rho^{2}$ we get
\[
S=\left(  \pi\beta\Lambda\right)  \left[  2\left(  \Lambda a\right)  ^{3}%
{\displaystyle\int_{0}^{\infty}}
\rho^{3}h\left(  \rho^{2}\right)  d\rho-\left(  \Lambda a\right)
{\displaystyle\int_{0}^{\infty}}
\rho h\left(  \rho^{2}\right)  d\rho\right]  +O\left(  \Lambda^{-4}\right)
\]
This confirms equation (\ref{estispecact}) and shows that, to a very high
degree of accuracy, the spectral action on $S^{3}\times S^{1}$ is given by the
first two terms.

\begin{remark}\label{11rem} {\rm
It is worth noting that one can also check the value of the Gauss-Bonnet term
on $S^{4}$ and show that it agrees with the value obtained in \eqref{4sphere}.
To see this note that the Riemann tensor in this case is given by (\cite{weinberg})
\[
R_{\mu\nu\rho\sigma}=-a^{-2}\left(  g_{\mu\rho}g_{\nu\sigma}-g_{\mu\sigma
}g_{\nu\rho}\right)
\]
which implies\footnote{One can double check the value of $R^{\ast}R^{\ast}$ using the
Gauss--Bonnet Theorem.  } that
\begin{align*}
C_{\mu\nu\rho\sigma} &  =0\\
R^{\ast}R^{\ast} &  =6 a^{-4}
\end{align*}
and thus
\[
a_{4}=\frac{1}{4\pi^{2}}\frac{11}{60}%
a^{-4}{\displaystyle\int_{S^{4}}}
d^{4}x\sqrt{g}%
\]
The volume of $S^{4}$ is
\[
V_{4}=%
{\displaystyle\int_{S^{4}}}
d^{4}x\sqrt{g}=\frac{2\pi^{\frac{5}{2}}}{\Gamma\left(  \frac{5}{2}\right)
}a^{4}=\frac{8\pi^{2}}{3}a^{4}%
\]
and this implies that
\[
a_{4}=\frac{11}{360}\times 4%
\]
which agrees exactly with the calculation of  \eqref{4sphere} based on zeta functions.}
\end{remark}

\medskip

\begin{center}
{\Large \textbf{Appendix}}

\bigskip
\end{center}

In this appendix we compute the eighteen non-vanishing terms that appear in
the $a_{6}$ term of the heat kernel expansion. \ Using the properties
\begin{align*}
R_{\mu\nu}^{2} &  =R_{ij}^{2}=12a^{-4}\\
R_{\mu\nu\rho\sigma}^{2} &  =R_{ijkl}^{2}=12a^{-4}%
\end{align*}

\begin{align*}
35R^{3}  &  =35(6)^{3}a^{-6}\\
-42RR_{\mu\nu}^{2}  &  =-42\left(  6\right)  \left(  12\right)  a^{-6}\\
-208R_{\mu\nu}R_{\mu\rho}R_{\nu\rho}  &  =-208\left(  2\right)  ^{3}%
g_{ij}g_{ik}g_{jk}=208\left(  2\right)  ^{3}\left(  3\right)  a^{-6}\\
-192R_{\mu\rho}R_{\nu\sigma}R_{\mu\nu\rho\sigma}  &  =-192R_{ik}R_{jl}%
R_{ijkl}\\
&  =192\left(  2\right)  ^{2}g_{ik}g_{jl}\left(  g_{ik}g_{jl}-g_{il}%
g_{jk}\right)  a^{-6}\\
&  =192\left(  24\right)  a^{-6}\\
-48R_{\mu\nu}R_{\mu\rho\sigma\kappa}R_{\nu\rho\sigma\kappa}  &  =-48\left(
2\right)  g_{ij}\left(  g_{ik}g_{lm}-g_{il}g_{km}\right)  \left(  g_{jk}%
g_{lm}-g_{jl}g_{km}\right)  a^{-6}\\
&  =-48\left(  4\right)  g_{ij}\left(  2g_{ij}\right)  a^{-6}=-48\left(
24\right)  a^{-6}\\
-44R_{\mu\nu\rho\sigma}R_{\mu\nu\kappa\lambda}R_{\rho\sigma\kappa\lambda}  &
=44\left(  g_{ik}g_{jl}-g_{il}g_{jk}\right)  \left(  g_{ip}g_{jq}-g_{iq}%
g_{jp}\right)  \left(  g_{kp}g_{lq}-g_{lq}g_{lp}\right)  a^{-6}\\
&  =44\left(  4\right)  \left(  6\right)  a^{-6}\\
-80R_{\mu\nu\rho\sigma}R_{\mu\kappa\rho\lambda}R_{\nu\kappa\sigma\lambda}  &
=80\left(  g_{ik}g_{jl}-g_{il}g_{jk}\right)  \left(  g_{ik}g_{pq}-g_{iq}%
g_{pk}\right)  \left(  g_{jl}g_{pq}-g_{jq}g_{pl}\right)  a^{-6}\\
&  =80\left(  3g_{jl}g_{pq}-g_{pq}g_{jl}-g_{lj}g_{pq}+g_{lq}g_{jp}\right)
\left(  g_{jl}g_{pq}-g_{jq}g_{pl}\right)  a^{-6}\\
&  =80\left(  9-3+3-3\right)  a^{-6}\\
&  =80\left(  6\right)  a^{-6}%
\end{align*}
Collecting the first set of terms we get
\begin{align*}
&  -\frac{4a^{-6}}{9\cdot7!}\left(  -35\cdot6^{3}+42\cdot72-42\cdot
72+208\cdot24-192\cdot24+48\cdot24-44\cdot24-80\cdot6\right) \\
&  =\frac{2}{3}a^{-6}%
\end{align*}
Now we continue with the second set of terms%
\begin{align*}
-12\text{Tr}\left(  \Omega_{\mu\nu}\Omega_{\nu\rho}\Omega_{\rho\mu}\right)
&  =-12\left(  \frac{1}{4}\right)  ^{3}\text{Tr}\left(  \gamma_{ab}\gamma
_{cd}\gamma_{ef}\right)  R_{\mu\nu}^{\quad ab}R_{\nu\rho}^{\quad cd}R_{\rho
\mu}^{\quad ef}\\
&  =\text{Tr}\left(  1\right)  12\left(  \frac{1}{4}\right)  ^{3}\left(
8\right)  R_{\mu\nu ab}R_{\nu\rho bc}R_{\rho\mu ac}\\
&  =-\frac{3}{2}\left(  g_{ik}g_{jl}-g_{il}g_{jk}\right)  \left(  g_{jl}%
g_{pq}-g_{jp}g_{lq}\right)  \left(  g_{pk}g_{iq}-g_{pi}g_{kq}\right)
a^{-6}\text{Tr}\left(  1\right) \\
&  =-\frac{3}{2}\left(  3g_{ik}g_{pq}-g_{ik}g_{pq}-g_{ik}g_{pq}+g_{iq}%
g_{pk}\right)  \left(  g_{pk}g_{iq}-g_{pi}g_{kq}\right)  4a^{-6}\\
&  =-\frac{3}{2}\left(  3-3+9-3\right)  4a^{-6}\\
&  =-9\cdot 4a^{-6}%
\end{align*}%
\begin{align*}
-6R_{\mu\nu\rho\sigma}\text{Tr}\left(  \Omega_{\mu\nu}\Omega_{\rho\sigma
}\right)   &  =-\frac{6}{4^{2}}R_{\mu\nu\rho\sigma}\text{Tr}\left(
\gamma_{ab}\gamma_{cd}\right)  R_{\mu\nu}^{\quad ab}R_{\rho\sigma}^{\quad
cd}\\
&  =\frac{12}{16}R_{\mu\nu\rho\sigma}R_{\mu\nu}^{\quad ab}R_{\rho\sigma
ab}\text{Tr}\left(  1\right) \\
&  =-\frac{3}{4}\left(  g_{ik}g_{jl}-g_{il}g_{jk}\right)  \left(  g_{ip}%
g_{jq}-g_{iq}g_{jp}\right)  \left(  g_{kp}g_{lq}-g_{kq}g_{lp}\right)
4a^{-6}\\
&  =-3\left(  9-3\right)  4a^{-6}=-18\cdot4a^{-6}%
\end{align*}%
\begin{align*}
-4R_{\mu\nu}\text{Tr}\left(  \Omega_{\mu\rho}\Omega_{\nu\rho}\right)   &
=-\frac{1}{4}R_{\mu\nu}\text{Tr}\left(  \gamma_{ab}\gamma_{cd}\right)
R_{\mu\rho}^{\quad ab}R_{\nu\rho}^{\quad cd}\\
&  =\frac{1}{2}R_{\mu\nu}R_{\mu\rho}^{\quad ab}R_{\nu\rho ab}^{\quad}%
\text{Tr}\left(  1\right) \\
&  =\frac{1}{2}\left(  2\right)  g_{ij}\left(  g_{ip}g_{mq}-g_{iq}%
g_{mp}\right)  \left(  g_{jp}g_{mq}-g_{jq}g_{mp}\right)  4a^{-6}\\
&  =2\left(  9-3\right)  4a^{-6}=12\cdot4a^{-6}%
\end{align*}%
\begin{align*}
5R\text{Tr}\left(  \Omega_{\mu\nu}^{2}\right)   &  =\frac{5}{16}%
R\text{Tr}\left(  \gamma_{ab}\gamma_{cd}\right)  R_{\mu\nu}^{\quad ab}%
R_{\mu\nu}^{\quad cd}\\
&  =-\frac{5}{8}RR_{\mu\nu\rho\sigma}^{2}\text{Tr}\left(  1\right) \\
&  =-\frac{5}{8}\left(  6\right)  \left(  12\right)  4a^{-6}\\
&  =-45\cdot4a^{-6}%
\end{align*}

\begin{align*}
60\text{Tr}\left(  E^{3}\right)   &  =60\left(  -\frac{1}{4}\right)  ^{3}%
R^{3}\text{Tr}\left(  1\right) \\
&  =60\left(  -\frac{1}{4}\right)  ^{3}\left(  6\right)  ^{3}\cdot4a^{-6}\\
&  =-\frac{1}{2}\left(  15\cdot27\right)  \cdot4a^{-6}%
\end{align*}

\begin{align*}
30\text{Tr}\left(  E\Omega_{\mu\nu}^{2}\right)   &  =30\left(  -\frac{3}%
{2}\right)  (-2)\left(  -\frac{1}{4}\right)  ^{2}a^{-2}R_{\mu\nu\rho\sigma
}^{2}\text{Tr}\left(  1\right) \\
&  =\frac{90}{16}(12)4a^{-6}\\
&  =\frac{1}{2}\left(  5\cdot27\right)  \cdot4a^{-6}%
\end{align*}%
\begin{align*}
30RE^{2}\text{Tr}\left(  1\right)   &  =\frac{30}{16}R^{3}\text{Tr}\left(
1\right) \\
&  =\left(  \frac{15}{8}\right)  \left(  6\right)  ^{3}4a^{-6}\\
&  =\left(  15\cdot27\right)  \cdot4a^{-6}%
\end{align*}%
\begin{align*}
5R^{2}E\text{Tr}\left(  1\right)   &  =-\frac{5}{4}R^{3}\cdot4\\
&  =-\frac{5}{4}\left(  6\right)  ^{3}4a^{-6}\\
&  =-\left(  10\cdot27\right)  \cdot4a^{-6}%
\end{align*}%
\begin{align*}
-2R_{\mu\nu}^{2}E\text{Tr}\left(  1\right)   &  =-2R_{\mu\nu}^{2}\left(
-\frac{R}{4}\right)  4\\
&  =-2\left(  12\right)  \left(  -\frac{3}{2}\right)  4a^{-6}\\
&  =36\cdot4a^{-6}%
\end{align*}%
\begin{align*}
2R_{\mu\nu\rho\sigma}^{2}E\text{Tr}\left(  1\right)   &  =2\left(  12\right)
\left(  -\frac{3}{2}\right)  4a^{-6}\\
&  =-36\cdot4a^{-6}%
\end{align*}
Collecting the second set of terms we get
\begin{align*}
-  &  \frac{4a^{-6}}{360}\left(  9+18-12+45+\frac{15\cdot27}{2}-\frac
{5\cdot27}{2}-15\cdot27+10\cdot27-36+36\right) \\
&  =-\frac{2}{3}a^{-6}%
\end{align*}
Thus the sum of all the terms in $a_{6}$ is zero.

\vspace{0.3cm} {\centerline{\large\bf {Acknowledgements}}} \vspace{0.3cm}

The research of A. H. C. is supported in part by the Arab Fund for Social and
Economic Development.

\end{document}